\newif\ifconference
\newif\ifanonymous
\title{Near-Universally-Optimal Differentially Private \\Minimum Spanning Trees}
\author{
	\ifanonymous
	Anonymous Authors
	\else
	Richard Hladík\thanks{Supported by the VILLUM Foundation grant 54451. The work was done while this author was visiting BARC at the University of Copenhagen.}\\
    \texttt{rihl@uralyx.cz}\\
	ETH Zurich
    \and
	Jakub Tětek\thanks{Supported by the VILLUM Foundation grant 54451.}\\
	\texttt{j.tetek@gmail.com}\\
	BARC, Univ. of Copenhagen
	\fi
}
\date{}
\newcommand{\mytodo}[2]{\todo[size=\tiny, color=#1!50!white]{#2}\xspace}
\newcommand{\myinlinetodo}[2]{\todo[size=\small, color=#1!50!white, inline]{#2}\xspace}
\newcommand{\rh}[1]{\mytodo{green}{RH: #1}}
\newcommand{\rhinline}[1]{\myinlinetodo{green}{RH: #1}}
\renewcommand\O{\mathcal{O}}
\theoremstyle{plain}
\newtheorem{theorem}{Theorem}[section]
\newtheorem{lemma}[theorem]{Lemma}
\newtheorem{meta-theorem}[theorem]{Meta-Theorem}
\newtheorem{claim}[theorem]{Claim}
\newtheorem{corollary}[theorem]{Corollary}
\newtheorem{fact}[theorem]{Fact}
\theoremstyle{definition}
\newtheorem{definition}[theorem]{Definition}
\newtheorem{algorithm}[theorem]{Algorithm}
\theoremstyle{remark}
\crefname{theorem}{Theorem}{Theorems}
\crefname{proposition}{Proposition}{Propositions}
\crefname{observation}{Observation}{Observations}
\crefname{lemma}{Lemma}{Lemmas}
\crefname{claim}{Claim}{Claims}
\crefname{problem}{Problem}{Problems}
\crefname{conjecture}{Conjecture}{Conjectures}
\crefname{question}{Question}{Questions}
\crefname{example}{Example}{Examples}
\crefname{fact}{Fact}{Facts}    
\crefname{invariant}{Invariant}{Invariants}
\crefname{rule}{Rule}{Rules}
\let\cref=\Cref
\let\citet\textcite
\newcommand\N{\mathbb{N}}
\newcommand\R{\mathbb{R}}
\newcommand\vew{\mathbf{w}}
\newcommand\vex{\mathbf{x}}
\newcommand\vey{\mathbf{y}}
\newcommand\simone{\sim_1}
\newcommand\siminf{\sim_\infty}
\newcommand\eps{\varepsilon}
\newcommand\trees{\mathcal T}
\DeclareMathOperator*\E{\mathbb{E}}
\newcommand\ham{d_H}
\newcommand\algo{\mathcal A}
\newcommand\oneweights{\mathds{1}}
\DeclareMathOperator*{\Lap}{Lap}
\DeclareMathOperator*{\diam}{diam_\mathcal T}
\newcommand\W{\mathcal W}
\newcommand\light{\mathcal L}
\newcommand\X{\mathcal X}
\newcommand\Y{\mathcal Y}
\newcommand\C{\mathcal C}
\def\ourauthors{Anonymous authors}
\def\ourauthors{Richard Hladík, Jakub Tětek}
\def\ourabstract{%
Devising mechanisms with good beyond-worst-case input-dependent performance has been an important focus of differential privacy, with techniques such as smooth sensitivity, propose-test-release, or inverse sensitivity mechanism being developed to achieve this goal. This makes it very natural to use the notion of universal optimality in differential privacy. Universal optimality is a strong instance-specific optimality guarantee for problems on weighted graphs, which roughly states that for any fixed underlying (unweighted) graph, the algorithm is optimal in the worst-case sense, with respect to the possible setting of the edge weights. 
\texorpdfstring{\par}{}%
	In this paper, we give the first such result in differential privacy. Namely, we prove that a simple differentially private mechanism for approximately releasing the minimum spanning tree is near-optimal in the sense of universal optimality for the \texorpdfstring{$\ell_1$}{ℓ₁} neighbor relation. 
 Previously, it was only known that this mechanism is nearly optimal in the worst case. We then focus on the \texorpdfstring{$\ell_\infty$}{ℓ∞} neighbor relation, for which the described mechanism is not optimal. We show that one may implement the exponential mechanism for MST in polynomial time, and that this results in universal near-optimality for both the \texorpdfstring{$\ell_1$}{ℓ₁} and the \texorpdfstring{$\ell_\infty$}{ℓ∞} neighbor relations.
}
\def\ourkeywords{differential privacy, universal optimality, minimum spanning trees}
\begin{document}

\maketitle

\begin{abstract}
\ourabstract
\end{abstract}

 \thispagestyle{empty}
\newpage
\clearpage
\setcounter{page}{1}

\section{Introduction}

The minimum spanning tree (MST) problem is one of the classic combinatorial problems, making the release of an MST a fundamental question in differential privacy.
Unfortunately, if the edge set is private, this problem cannot be solved, as it would require us to release a subset of the edges (note that we want to release the edges of the MST and not just their total weight).
It is therefore natural to consider this problem in the (standard) setting where the underlying unweighted graph is public, but the weights are private. 

The following simple near-linear-time mechanism has been proposed for this problem \cite{mst-laplace}: add Laplacian noise to the edge weights, making them private, then find the MST with the noisy weights. 
At the same time, the author shows that this mechanism is near-optimal in the worst case.

In this paper, we prove a much stronger optimality claim for this algorithm, namely that it is universally optimal up to a logarithmic factor for the $\ell_1$ neighbor relation (i.e., two graphs on the same edge set are neighboring if the edge weights differ by $\leq 1$ in the $\ell_1$ norm).
%
An algorithm (or differentially private mechanism) that works on weighted graphs is said to be universally optimal, if for any fixed underlying unweighted graph $G$, it is worst-case optimal with respect to the possible settings of edge weights $\vew$. That is, if we write a weighted graph $\Tilde{G} = (G, \vew)$, universal optimality states that for any fixed $G$, we are worst-case optimal w.r.t.~$\vew$. Standard worst-case optimality, on the other hand, is worst-case w.r.t.~both $\vew$ and $G$.

We then focus on differential privacy with the $\ell_\infty$ neighbor relation (defined analogously to the $\ell_1$ neighbor relation), for which the above algorithm is \emph{not} universally optimal. We instead prove that one may implement the exponential mechanism for MST in polynomial time by relying on a known sampling result \cite{mst-sampling-in-matrix-multiplication}. We prove that this more complicated and somewhat slower $\O(n^\omega)$-time algorithm does achieve universal optimality up to a logarithmic factor for both the $\ell_1$ and the $\ell_\infty$ neighbor relations. For the $\ell_\infty$ neighbor relation, this improves upon the PAMST algorithm of \citet{mst-pamst} which is only known to be near-optimal in the worst-case sense.

Our results are the first to prove the universal optimality of a differentially private mechanism.\footnote{The name universal optimality unfortunately has different meaning in different contexts. There are several results that are universally optimal with one of these different meanings, as we discuss in \Cref{sec:related_work}. We stick with the meaning commonly used in distributed algorithms.} 
This is despite the fact that previous work in differential privacy has put a lot of emphasis on instance-specific performance guarantees: smooth sensitivity, propose-test-release, privately bounding local sensitivity, and inverse sensitivity mechanism are all examples of this trend. This makes it very natural to focus on universal optimality, perhaps making it somewhat surprising that universal optimality is not already being commonly used in differential privacy.

As a side note, we also prove that the above-mentioned linear-time algorithm is optimal up to a constant factor in the worst-case sense. This improves upon previous results which had a logarithmic gap \cite{mst-laplace}. Similarly, we prove that the exponential mechanism is for MST worst-case optimal up to a constant factor for both the $\ell_1$ and the $\ell_\infty$ neighbor relations.

\subsection{Technical Overview} \label{sec:techniques}

In this section, we briefly discuss the intuition and techniques behind our results. Here, we focus mostly on the $\ell_1$ neighbor relation and universal near-optimality. The lower bound arguments for $\ell_\infty$ and the worst-case optimality are similar.
The comparison between ours and previous results is summarized in \cref{tab:results}.

Our lower and upper bounds rely on the properties of the set
$\trees(G)$ of all spanning trees of a given (unweighted) graph $G$. We define a Hamming-like metric $\ham$ in this space, defined as $\ham(T_1, T_2) = |T_1 \setminus T_2|$. It turns out that the diameter $D$ of $\trees(G)$ with respect to $\ham$ is a natural parameter that determines the ``hardness'' of $G$. Namely, we show an $\Omega(D/\eps)$ lower bound and an $\O(D \log n / \eps)$ upper bound on the expected error of the optimal $\eps$-differentially private algorithm. 

\paragraph{Upper bound.}
In \cref{sec:postprocessing}, we use the diameter $D$ to obtain a sharper analysis of the Laplace mechanism of \citet{mst-laplace}. The mechanism is very simple: add Laplacian noise to every edge, then return the MST with respect to the noisy weights. Its standard analysis uses the fact that with high probability, the noise on every edge is $\O(\log n / \eps)$, and thus the total error of the spanning tree returned is $\O(n\log n / \eps)$. Our improvement follows from the fact that the returned spanning tree differs from the MST in at most $2D$ edges, and thus the total error accumulated is actually only $\O(D \log n / \eps)$.

\paragraph{Lower bound.}
The most technically interesting part of this paper is our lower bound in \cref{sec:lower-bound}. We have a fixed underlying graph $G$ and for any $\eps$-differentially private mechanism, we want to find a weight assignment $\vew^-$ on which the error will be large. Our lower bound builds on a general packing-based lower bound for differentially private algorithms, which we briefly paraphrase in the language of MSTs and universal optimality: Given a set $\W$ of weight vectors and a parameter $x \ge 0$, define for each $\vew \in \W$ the set $\light_\vew$ of spanning trees that are \emph{$x$-light} for this $\vew$, i.e., that are heavier than the MST by at most $x$. Moreover, assume that the sets $\light_\vew$ are pairwise disjoint, that is, each spanning tree is $x$-light with respect to at most one weight vector in $\W$. Then the expected error of any $\eps$-differentially private algorithm is $\Omega(x \log |\W| / (r\eps))$ on at least one weight assignment $\vew^- \in \W$, where $r$ is the diameter of $\W$ in the metric induced by the neighbor relation $\sim$. 

Intuitively speaking, we have a set $\W$ of weight assignments and for each of them, we consider a ball of outputs that are ``good'' in the sense that their error is ``small'' (with respect to $x$) for this input. Our goal is to find a large collection $\W$ of weights that are ``close'' (i.e., $r$ is small), but which induce disjoint balls that are ``wide'' (i.e., we can set $x$ to be large).

\paragraph{Reduction to finding many dissimilar spanning trees.}
The problem of finding $\W$ that maximizes the expression $x \log|\W|/(r\eps)$ is somewhat complicated by the fact that we have a trade-off between $|\W|$, $x$ and $r$. In order to solve this problem, we first show how to reduce it to a combinatorial problem of finding a large set of dissimilar spanning trees.

The main idea of the reduction is as follows: assume we have $S \subseteq \trees(G)$ such that every two $T_1, T_2$ in $S$ have $\ham(T_1, T_2) > d$. We construct $\W$ by, for each $T \in S$, creating a weight vector $\vew_T$ by defining $\vew_T(e) = 0$ if $e \in T$ and $1$ otherwise. Now the crucial observation is that $\vew_T(\cdot) = \ham(T, \cdot)$. That is, for any $T'$, the weight of $T'$ under the weight vector $\vew_T$ is exactly the Hamming distance between $T$ and $T'$.
One can then verify that for every $T' \in \trees(G)$, there is at most one weight $\vew_T \in \W$ under which its weight is at most $d / 2$, as otherwise we would, by the triangle inequality, for some $T_1, T_2 \in S$ have that $\ham(T_1, T_2) \le \ham(T_1, T') + \ham(T_2, T') \le d$, which would be a contradiction with the definition of $d$. Therefore, we can set $x$ as large as $d / 2$ while still ensuring the disjointness property of all $\light_\vew$ required by the original lower bound. 

An upper bound of $r \leq 2D$ can be argued as follows: We have that any two spanning trees differ in $\leq 2D$ edges. Combined with having zero-one weights, any two spanning trees' weight vectors $\vew_T$ thus also have $\ell_1$-distance at most $2D$, and thus $r \le 2D$.

Using these ideas, we have reduced the original problem to the problem of finding a large set $S$ of spanning trees that is sparse in the sense that no two spanning trees in $S$ are similar. More specifically, we have reduced the original problem to the problem of finding a set $S$ of spanning trees which maximizes $\log |S| / d$.

\paragraph{How to find many dissimilar spanning trees?}

Finally, the problem of finding a large, yet sparse $S$, is reducible to a standard problem: We show that there are always at least $2^D$ spanning trees, and the problem of finding a sparse subset among them can be reduced to finding a binary code of length $D$ and minimum Hamming distance $\Omega(D)$ that has $2^{\Omega(D)}$ many codewords. Such a code exists by the Gilbert–Varshamov bound \cite{gilbert-varshamov1,gilbert-varshamov2}. The $S$ that we get from this reduction then allows us to prove an $\Omega(D / \eps)$ lower bound on the expected error, which nearly matches the $\O(D \log n / \eps)$ upper bound mentioned above.

\subsection{Related Work}
\label{sec:related_work}

\aboverulesep=0pt
\belowrulesep=0pt
\renewcommand\arraystretch{1.3}
\begin{table}[t]
\centering
\adjustbox{max width=\columnwidth}{
\begin{tabular}{ c || c | c || c || c | c | c}
	\multirow{2}{*}{\textbf{neighborhood}} & \multicolumn{2}{c||}{\textbf{on a fixed graph topology}} & \textbf{worst-case} & \multicolumn{3}{c}{\textbf{previous work (worst-case)}}\\
	\cline{2-7}
	& lower bound & upper bound & lower and upper bound & lower bound & upper bound & reference  \\ 
 \midrule[.8pt]
	$\ell_1$ & $\Omega(D/\eps)$ & $\O(D \log n/\eps)$ & $\Theta(n\log n/\eps)$ & $\Omega(n/\eps)$  & $\O(n \log n/\eps)$ & \cite{mst-laplace}\\
	$\ell_\infty$ & $\Omega(D^2/\eps)$ & $\O(D^2 \log n/\eps)$ & $\Theta(n^2\log /\eps)$ & ---  & $\O(n^2 \log n/\eps)$ & \cite{mst-pamst}\\
\end{tabular}
}
\caption{Summary of our results. Recall that $D$ is the diameter of the space $\mathcal{T}(G)$ of spanning trees of $G$, as defined in \Cref{sec:preliminaries-spanning-trees}.}
\label{tab:results}
\end{table}

\paragraph{Differentially private minimum spanning trees}

The problem of privately releasing the MST in the ``public graph, private weights'' setting was first studied by \citet{mst-laplace}. The author shows a worst-case lower bound of $\Omega(n / \eps)$ on the expected error for the $\ell_1$ neighbor relation, proposes a simple mechanism based on adding Laplacian noise to all weights and calculating the MST with respect to the noisy weights, and proves that it is worst-case optimal up to an $\O(\log n)$ factor. In \cref{sec:postprocessing,sec:lower-bound}, we show that this mechanism is in fact worst-case optimal up to a constant factor and also universally optimal up to an $\O(\log n)$ factor.

\citet{mst-pamst} gives a mechanism for the $\ell_\infty$ neighbor relation by running a noisy version of the Jarník-Prim algorithm that, in each step, selects the edge to be included in the tree by running the exponential mechanism. It has an $\O(n^2 \log n / \eps)$ expected error, which, by our results from \cref{sec:lower-bound}, is worst-case optimal. However, the analysis does not seem to be easily modifiable to show universal optimality. We remark that the author claims an expected error of $\O(n^2 \log n / (m\eps))$, but this is only true if one normalizes all weights by $1/m$.

We also note that releasing the \textit{weight} of the minimum spanning tree is easy. One may easily show that the global sensitivity is $1$ (with $\ell_1$) or $n - 1$ (with $\ell_\infty$),
allowing us to simply find the MST and release its weight using the Laplace mechanism.
In their work on smooth sensitivity, \citet{smooth-sensitivity} consider the problem of privately releasing the weight of the MST in a slightly different setting where the weights are bounded and neighboring datasets differ by changing the weight of one edge.

\paragraph{Other differential privacy notions on graphs}
The notion of privacy used in this work was introduced by \citet{mst-laplace} and was since used widely \cite{mst-pamst,pinot2018graph,chen2022all,fan2022private,brunet2016edge}. To paraphrase a real-world motivation, the graph may represent a (publicly known) road network, with edge weights corresponding to a measure of congestion. User's current location is private information which contributes to the congestion of an edge and should be protected.
Other common privacy notions include edge differential privacy and node differential privacy \cite{hay2009accurate}, where the graph itself is unweighted and private, and two graphs $G$ and $G'$ are neighboring if one can be obtained from the other by deleting an edge (for edge privacy) or a node and all its adjacent edges (for node privacy).

\paragraph{Instance-optimality in differential privacy}
The notion of universal optimality is closely linked to that of instance optimality, which states that our algorithm is ``as good as any mechanism could be'' on every single instance. Universal optimality can then be seen as a combination of instance-optimality w.r.t.~the underlying graph and worst-case optimality w.r.t.~the edge weights.

In the last few years, several instance-optimality results have been proven in differential privacy \cite{huang2021instance,dong2022nearly,blasiok2019towards,asi2020instance}. We highlight here \citet{huang2021instance} which give an instance-optimal differentially private mechanism for releasing the mean. We also highlight \citet{asi2020instance} who introduce the inverse sensitivity mechanism, and give instance-optimality results for mean estimation, performing linear regression, and the principal component analysis.

It should be noted that there are subtleties in the precise definitions of instance optimality that these papers use. The reason is that under the definition of instance-optimality that is commonly used in other areas, often no instance-optimal mechanism exists in differential privacy for trivial reasons.
Therefore, the precise definitions used in the mentioned papers differ somewhat.

\paragraph{Issues with nomenclature}
The name ``universal optimality'' has unfortunately been used to mean different things in different contexts. Throughout this paper, we use universal optimality in the sense in which it is commonly used in distributed algorithms \cite{haeupler2021universally}. It should be noted that there have been several works in differential privacy that use the name ``universal optimality'' to denote completely unrelated concepts \cite{ghosh2009universally,fernandes2022universal}. Namely, these papers use the name universal optimality in a Bayesian setting, where universal optimality states that a given mechanism is optimal no matter the prior. This is a completely unrelated notion to what we consider in this paper.

\paragraph{Universal optimality}
The notion of universal optimality started in distributed algorithms where multiple classic combinatorial problems are now known to have universally optimal algorithms in various settings \cite{haeupler2021universally,zuzic2022universally,haeupler2022hop,rozhovn2022undirected}. We highlight here the paper by \citet{haeupler2021universally} which gives a universally optimal algorithm in the supported CONGEST model for the minimum spanning tree; the techniques used in that paper are different from those that we use, despite both papers considering the MST problem.

Recently, it was shown that Dijkstra's algorithm is universally optimal for a version of the single-source shortest paths problem \cite{haeupler2023universal} in the standard Word-RAM model. To the best of our knowledge, that paper was the first universal optimality result in a non-distributed setting. This makes this paper the second such result.

\subsection{Future work}
We believe that our technique can be applied to other graph problems. Specifically, the framework that we use for our lower bound seems to be quite general. We believe that it is likely that the same techniques could be used for releasing shortest-path trees. With some additional tweaks, we believe our techniques could be useful for problems such as maximum-weight matching or minimum-weight perfect matching. Extending our results to approximate differential privacy would also be of great interest.

\section{Preliminaries}

In this paper, we consider all graphs to be simple, undirected and connected. We consider the setting introduced by \citet{mst-laplace}, where the \emph{unweighted} graph $G = (V, E)$ is public and fixed, and the only private information are the edge weights $\vew \in \R^E$. We also assume that $G$ has at least two different spanning trees, i.e., it is not itself a tree.
We denote by $\trees(G)$ the set of all spanning trees of $G$, and identify each $T \in \trees(G)$ with the set of its edges. It is folklore that for all connected graphs, $1 \le |\trees(G)| \le n^{n-2}$, with the maximum attained when $G$ is a clique. We write $\vew(T)$ as a shortcut for $\sum_{e \in T} \vew(e)$, and write $T^*_\vew$ to denote the minimum spanning tree under weights $\vew$. If $\vew$ is clear from context, we write just $T^*$. For a spanning tree $T$, its \emph{error} is defined as $\vew(T) - \vew(T^*)$.

\subsection{Differential Privacy}
We define two notions of adjacency for weight vectors: $\simone$, defined so that $\vew \simone \vew'$ if and only if $\|\vew-\vew'\|_1 \le 1$, and $\siminf$, defined so that $\vew \siminf \vew'$ if and only if $\|\vew-\vew'\|_\infty \le 1$.

A \emph{mechanism for MST} is any randomized algorithm that, for a fixed and public unweighted graph $G$, takes as input a weight vector $\vew$ and outputs a spanning tree of $G$,
specified by the list of its edges.\footnote{Formally, we have inifinitely many mechanisms, each for one graph $G$. What we will do instead is to pretend that there is a single mechanism that accepts $G$ as an additional parameter.} We are interested in minimizing the \emph{expected error} of $\algo$, defined as $\E_{T \sim \algo(G, \vew)}[\vew(T)] - \vew(T^*)$. Furthermore, $\algo$ is
\emph{$\eps$-differentially private} with respect to $\sim$ (which is either $\simone$
or $\siminf$), if, for every $\vew \sim \vew'$ and every $T \in \trees(G)$,
\[
	\Pr\left[\algo(G, \vew) = T\right] \le e^\eps \cdot \Pr\left[\algo(G, \vew') = T\right],
\]
where $\eps > 0$ is a parameter that controls the tradeoff between privacy and
the expected error of $\algo$.

\subsection{Spanning Trees}
\label{sec:preliminaries-spanning-trees}

We define the following metric on the space of spanning trees: $\ham(T_1, T_2) \coloneq |T_1 \setminus T_2|$.\footnote{The reader is invited to verify that $\ham$ is indeed a metric.} We call $\ham$ the \emph{Hamming metric} or \emph{Hamming distance} between $T_1$ and $T_2$. Note that $|T_1 \setminus T_2| = |T_2 \setminus T_1|$ and we could have used either in the previous definition. We then define $\diam(G) \coloneqq \max_{T_1, T_2 \in \trees(G)} \ham(T_1, T_2)$ as the diameter of the space of spanning trees of $G$ with respect to $\ham$. Trivially, $\diam(G) \le n - 1$, but it can be much smaller: for example, we have $\diam(G) = 1$ when $G$ is a cycle, and generally, $\diam(G) \le k$ if $G$ has at most $n - 1 + k$ edges. In \cref{sec:spanning-trees}, we provide an exponential lower and upper bound on the relationship between $\diam(G)$ and $|\trees(G)|$.

In the lower bounds, we will make use of special zero-one weight vectors, derived from spanning trees:

\begin{definition}
	\label{def:oneweights}
	For a spanning tree $T \in \trees(G)$, define the weights $\oneweights_T \in R^E$ as
	\[
		\oneweights_T(e) = 
		\begin{dcases}
			0 & \text{if }e \in T,\\
			1 & \text{otherwise}.
		\end{dcases}
	\]
\end{definition}
\begin{fact}
	\label{fact:oneweights-properties}
	For two trees $T_1, T_2 \in \trees(G)$, it holds $\oneweights_{T_1}(T_2) = |T_2 \setminus T_1| = \ham(T_1, T_2) = |T_1 \setminus T_2| = \oneweights_{T_2}(T_1)$. In particular, $\oneweights_T(T) = 0$. It also holds that $\|\oneweights_{T_1} - \oneweights_{T_2}\|_1 = 2\ham(T_1, T_2)$ and $\|\oneweights_{T_1} - \oneweights_{T_2}\|_\infty \le 1$ (with equality if $T_1 \ne T_2$).
\end{fact}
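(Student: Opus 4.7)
The plan is to verify each claim by directly unfolding the definition of $\oneweights_T$ and using the basic fact that every spanning tree of $G$ has exactly $n-1$ edges.

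First I would establish the chain of equalities $\oneweights_{T_1}(T_2) = |T_2 \setminus T_1| = \ham(T_1, T_2) = |T_1 \setminus T_2| = \oneweights_{T_2}(T_1)$. By definition, $\oneweights_{T_1}(T_2) = \sum_{e \in T_2} \oneweights_{T_1}(e)$; since $\oneweights_{T_1}(e)$ is $0$ on edges of $T_1$ and $1$ elsewhere, this sum counts exactly the edges of $T_2$ that are not in $T_1$, giving $|T_2 \setminus T_1|$. The middle equality $|T_2 \setminus T_1| = |T_1 \setminus T_2|$ follows from $|T_1| = |T_2| = n-1$ together with the inclusion-exclusion identity $|T_i \setminus T_{3-i}| = |T_i| - |T_1 \cap T_2|$. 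The definition of $\ham$ gives $\ham(T_1, T_2) = |T_1 \setminus T_2|$, and the symmetric statement $\oneweights_{T_2}(T_1) = |T_1 \setminus T_2|$ is obtained by swapping the roles. The special case $\oneweights_T(T) = 0$ then follows by taking $T_1 = T_2 = T$.

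Next I would handle the two norm claims by looking at the pointwise difference $\oneweights_{T_1}(e) - \oneweights_{T_2}(e)$. This difference is $0$ whenever $e$ lies in both trees or in neither, and is $\pm 1$ exactly when $e$ lies in the symmetric difference $T_1 \triangle T_2$. The $\ell_\infty$ bound $\|\oneweights_{T_1} - \oneweights_{T_2}\|_\infty \le 1$ is then immediate, with equality whenever $T_1 \ne T_2$, since two distinct spanning trees (of the same size) must have a nonempty symmetric difference. For the $\ell_1$ norm, summing $|\oneweights_{T_1}(e) - \oneweights_{T_2}(e)|$ over all edges yields $|T_1 \triangle T_2| = |T_1 \setminus T_2| + |T_2 \setminus T_1| = 2\ham(T_1, T_2)$, using the already-established equality of the two set differences.

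There is no real obstacle here; the whole statement is a bookkeeping exercise, and the only subtle point is remembering that both trees have the same cardinality $n-1$, which is what makes $|T_1 \setminus T_2| = |T_2 \setminus T_1|$ and what guarantees the existence of an edge in the symmetric difference when $T_1 \ne T_2$.
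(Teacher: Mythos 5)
Your proof is correct and complete. The paper states this as a \emph{Fact} without giving a proof (it is treated as immediate from the definitions), and your argument is exactly the routine unfolding of $\oneweights_T$ and $\ham$, using $|T_1| = |T_2| = n-1$ for the symmetry $|T_1 \setminus T_2| = |T_2 \setminus T_1|$, that the paper implicitly relies on.
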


\subsection{Universal Optimality}

Universal optimality is a notion that, intuitively speaking, says that an algorithm is optimal for any fixed graph topology. It is most commonly used with respect to the time complexity, but here, we instead define it in terms of an expected error of an $\eps$-differentially private mechanism.

\begin{definition}
	\label{def:universal-optimality}
Let $\mathcal P$ be any optimization problem on weighted graphs where for every input $(G, \vew) \in \X$,
the goal is to produce an output $y \in \Y$ minimizing some scoring function
$\mu(G, \vew, y)$. Define $\mu^*(G, \vew) = \min_{y \in \Y} \mu(x, y)$. For a fixed graph $G$ and an algorithm $\algo$, define worst-case expected error of $\algo$ on $G$ as
\[
	R(\algo, G) \coloneqq \max \Big\{\, \E \left[ \mu(G, \vew, \algo(G, \vew)) \right] - \mu^*(G, \vew) \;\Big|\; \vew : (G, \vew) \in \X\,\Big\}.
\]

We say that an $\eps$-differentially private mechanism $\algo : \X \to \Y$ is \emph{universally optimal} for $\mathcal P$, if there exists a constant $c > 0$ such that for any unweighted graph $G$ and any other $\eps$-differentially private mechanism $\algo^*$, we have
\[
	R(\algo, G) \le c \cdot R(\algo^*, G).
\]
We instead say that $\algo$ is universally optimal up to factor $f(G)$ if
\[
	R(\algo, G) \le c \cdot f(G) \cdot R(\algo^*, G).
\]
\end{definition}

\subsection{Exponential Mechanism}

For completeness, we restate the guarantees of the exponential mechanism.

\begin{fact}[Guarantees of the exponential mechanism \cite{exp-mechanism,exp-mechanism-expectation}]\label{lem:exp-mechanism-guarantees}
	Let $\mu : \X \times \Y \to \R$ be a function, and let $\sim$ be a neighbor relation on $\X$. The \emph{exponential mechanism} $\algo : \X \to \Y$ that, given $x \in \X$, samples $y \in \Y$ with probability proportional to $\exp(-\frac{\eps}{2\Delta}\cdot\mu(x, y))$, is $\eps$-differentially private and satisfies:
	\[
		\E_{y \sim \algo(x)}[\mu(x, y)] \le \mu(x, y^*) + \frac{2\Delta \log |\Y|}{\eps},
	\]
	where $y^*$ is the minimizer of $\mu(x, \cdot)$ and $\Delta$ is the \emph{global sensitivity} of $\mu$, defined as
	\[
		\Delta = \sup_{x \sim x' \in \X} \max_{y \in \Y} \left|\mu(x, y) - \mu(x', y)\right|.
	\]
\end{fact}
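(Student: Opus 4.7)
The plan is to prove the privacy claim and the utility claim separately, along standard lines for the exponential mechanism. Write $Z(x) = \sum_{y' \in \Y} \exp(-\frac{\eps}{2\Delta}\mu(x,y'))$ for the normalizer, so that $\Pr[\algo(x) = y] = \exp(-\frac{\eps}{2\Delta}\mu(x,y))/Z(x)$.

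For privacy, fix neighbors $x \sim x'$ and any output $y \in \Y$. I would decompose the likelihood ratio as
\[
    \frac{\Pr[\algo(x) = y]}{\Pr[\algo(x') = y]} = \frac{\exp(-\frac{\eps}{2\Delta}\mu(x,y))}{\exp(-\frac{\eps}{2\Delta}\mu(x',y))} \cdot \frac{Z(x')}{Z(x)}.
\]
By the definition of $\Delta$, the first factor is bounded by $\exp(\eps/2)$. For the second factor, apply the pointwise bound $\mu(x',y') \le \mu(x,y') + \Delta$ inside the exponent of each summand of $Z(x')$; this yields $Z(x') \le \exp(\eps/2) \cdot Z(x)$. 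Multiplying the two bounds gives the required $\exp(\eps)$. This step is essentially mechanical, and the choice of scaling $\frac{\eps}{2\Delta}$ (rather than $\frac{\eps}{\Delta}$) is precisely what splits the privacy budget evenly between the two factors.

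For utility, the key observation is that $Z(x) \ge \exp(-\frac{\eps}{2\Delta}\mu(x,y^*))$, because the single summand corresponding to $y^*$ already lower-bounds the sum. Plugging this into the sampling probability gives the pointwise bound
\[
    \Pr[\algo(x) = y] \le \exp\!\left(-\tfrac{\eps}{2\Delta}(\mu(x,y) - \mu(x,y^*))\right),
\]
and a union bound over all $y$ with excess cost at least $t$ yields the tail estimate $\Pr[\mu(x,\algo(x)) - \mu(x,y^*) \ge t] \le |\Y|\exp(-\frac{\eps t}{2\Delta})$. I would then recover the expectation bound by integrating this tail, splitting the integral at $t_0 = \frac{2\Delta \log |\Y|}{\eps}$ (the point where the bound $|\Y|\exp(-\frac{\eps t}{2\Delta})$ equals $1$) and computing the two pieces; the integral evaluates to at most $\frac{2\Delta(\log|\Y| + 1)}{\eps}$, which yields the stated bound after absorbing the additive $\frac{2\Delta}{\eps}$ (or by the slightly tighter direct computation of \cite{exp-mechanism-expectation}).

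Neither step is the main obstacle: the privacy argument is a textbook factorization, and the utility argument is just a tail integration off the partition-function lower bound. The only mild subtlety is matching the exact constant in the stated bound, which I would handle by citing \cite{exp-mechanism-expectation} for the polished form rather than re-deriving the constant from scratch.
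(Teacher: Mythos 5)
Your proposal cannot be compared to a paper proof because none is given: the paper states this as a \emph{Fact} cited to the literature and does not re-derive it. Judged on its own, your privacy argument is the standard factorization and is correct (minor slip: to upper-bound $Z(x')$ you actually need $\mu(x',y') \ge \mu(x,y') - \Delta$, not $\mu(x',y') \le \mu(x,y') + \Delta$; both follow from the definition of $\Delta$, so the overall step is fine). The genuine gap is in the utility bound. Integrating the tail estimate $\Pr\bigl[\mu(x,\algo(x)) - \mu(x,y^*) \ge t\bigr] \le |\Y|\,e^{-\eps t/(2\Delta)}$ gives $\E\bigl[\mu(x,\algo(x))\bigr] - \mu(x,y^*) \le \frac{2\Delta(\log|\Y|+1)}{\eps}$, which is strictly weaker than the claimed $\frac{2\Delta\log|\Y|}{\eps}$. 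The extra additive $2\Delta/\eps$ cannot in general be absorbed: the stated inequality is exact, not asymptotic, and for small $|\Y|$ (say $|\Y|=2$) the absorption is simply false. Deferring to \cite{exp-mechanism-expectation} for ``the polished form'' is not a proof of the constant.

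To actually recover the stated constant, replace tail integration with a convexity argument on the log-partition function. Let $q(y) = \mu(x,y) - \mu(x,y^*) \ge 0$; shifting the loss by a constant does not change the output distribution, so $\algo(x)$ samples $y$ with probability proportional to $e^{-\beta q(y)}$ for $\beta = \eps/(2\Delta)$. Set $f(\beta) = \log\sum_{y\in\Y} e^{-\beta q(y)}$. Then $f$ is convex (its second derivative is the variance of $q$ under the sampling distribution), $f'(\beta) = -\E_{y\sim\algo(x)}[q(y)]$, $f(0) = \log|\Y|$, and $f(\beta) \ge 0$ since the $y=y^*$ summand alone contributes $e^{0}=1$. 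The supporting-line inequality $f(0) \ge f(\beta) + (0-\beta)f'(\beta)$ gives $\log|\Y| \ge f(\beta) + \beta\,\E[q] \ge \beta\,\E[q]$, i.e.\ $\E_{y\sim\algo(x)}[\mu(x,y)] - \mu(x,y^*) \le \frac{2\Delta\log|\Y|}{\eps}$ exactly. This is the missing step your sketch handed off to the citation.
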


\section{Simple Mechanism for Privately Releasing the MST}
\label{sec:postprocessing}

In this section, we show a tighter analysis of a very simple mechanism for privately releasing an MST, which originally appeared in \citet{mst-laplace}. Whereas the original analysis proves that the expected error is at most $\O(n \log n/\eps)$ for the $\simone$ neighbor relation, we prove a tighter bound of $\O(D \log n/\eps)$ for $D = \diam(G)$ (note that $D \le n - 1$). We prove in \cref{sec:lower-bound} a lower-bound of $\Omega(D/\eps)$, showing that the algorithm is universally near-optimal. On the other hand, in \cref{cl:postprocessing-not-optimal-for-inf} we prove that for $\siminf$, the algorithm is neither worst-case nor universally optimal.

Our goal is to prove the following corollary. It follows from \cref{cor:ub-laplace} (which states the upper bound), \cref{thm:lb-main} (which states the lower bound) and the fact that the time complexity of \cref{alg:laplace} is dominated by the runtime of an MST algorithm. 

\begin{corollary}
	\label{cor:laplace-optimal}
	\cref{alg:laplace} is $\eps$-differentially private under the $\ell_1$ neighbor relation $\sim_1$ for any $\eps = \O(1)$. It is universally optimal up to an $\O(\log n)$ factor for releasing the MST under the $\ell_1$ neighbor relation. It runs in near-linear time.
\end{corollary}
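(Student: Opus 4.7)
The plan is to assemble the corollary from the three ingredients foreshadowed in the paragraph just above the statement, plus one short standard argument for privacy.

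First, I would verify $\eps$-differential privacy. \cref{alg:laplace} perturbs every edge weight by an independent Laplace variable with scale $1/\eps$ and then outputs an MST of the noisy weighted graph. Under $\simone$, the identity map $\vew \mapsto \vew$ has $\ell_1$-sensitivity $1$, so releasing the noisy weight vector is $\eps$-DP by the standard Laplace-mechanism guarantee. Computing an MST of those noisy weights is data-independent post-processing and therefore preserves $\eps$-DP. This step needs no bound on $\eps$; the qualifier $\eps = \O(1)$ only becomes relevant in the optimality ratio below, where it keeps the denominator $D/\eps$ of the lower bound from shrinking to zero.

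Next, I would combine the upper and lower bounds. By \cref{cor:ub-laplace}, the worst-case expected error of \cref{alg:laplace} on any fixed $G$ is $\O(D \log n / \eps)$ with $D = \diam(G)$. By \cref{thm:lb-main}, every $\eps$-DP mechanism $\algo^*$ for MST has worst-case expected error $\Omega(D/\eps)$ on the same $G$. Dividing, the worst-case expected error of \cref{alg:laplace} on $G$ is at most a constant times $\log n$ times that of $\algo^*$, which is exactly universal optimality up to an $\O(\log n)$ factor in the sense of \cref{def:universal-optimality}. The standing assumption that $G$ is connected and not itself a tree gives $D \ge 1$, so both bounds are nontrivial and the ratio is well defined.

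For the running time, sampling $m$ independent Laplace random variables takes $\O(m)$ time, and computing an MST of the resulting perturbed weights can be done by any near-linear-time MST algorithm; the overall running time is therefore near-linear, as claimed.

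The real work is hidden inside the two cited results: the sharper post-processing analysis leading to \cref{cor:ub-laplace}, which exploits the observation that the returned noisy MST differs from the true MST in at most $2D$ edges rather than applying a per-edge union bound, and the packing-based lower bound \cref{thm:lb-main}. Once those are in place, the corollary itself is essentially a one-line division and a standard privacy/post-processing check, and I do not expect any genuine obstacle at this stage beyond those two background results.
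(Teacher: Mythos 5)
Your proof matches the paper's: the corollary is obtained by combining the upper bound of \cref{cor:ub-laplace} with the lower bound of \cref{thm:lb-main}, together with the observations that Laplace noise followed by post-processing gives $\eps$-DP and that the running time is dominated by a single near-linear-time MST computation. Your added remarks on why $\eps = \O(1)$ and $D \ge 1$ are needed (so that the $\Omega(D/\eps) - \O(1)$ lower bound stays $\Omega(D/\eps)$ and the optimality ratio is well defined) are correct and make explicit what the paper leaves implicit.
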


\begin{algorithm}[MST via postprocessing]
\label{alg:laplace}
	Given the weight vector $\vew$, let $b = 1/\eps$ for $\simone$ and $b = m/\eps$ for $\siminf$. Release the noisy weights $\hat\vew$ obtained by, for each edge, independently sampling $\hat\vew(e) \coloneqq \vew(e) + \Lap(b)$, where $\Lap(b)$ is the Laplacian distribution with mean 0 and scale $b$. Calculate the MST of $(G, \hat \vew)$ and return it.
\end{algorithm}

\begin{theorem}
	\label{thm:ub-laplace}
	\cref{alg:laplace}, denoted here as $\algo$, is $\eps$-differentially private for both $\simone$ and $\siminf$. For $\simone$, and for any $\gamma > 0$, it holds that
	\[
		\Pr_{T \sim \algo(G, \vew)}\left[\,\vew(T) \le \vew(T^*) + \frac{4D \log(n/\gamma)}{\eps}\,\right] \ge 1 - \gamma,
	\]
	where $T^*$ is the MST of $(G, \vew)$ and $D = \diam(G)$.
\end{theorem}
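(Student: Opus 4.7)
The plan is to split the argument cleanly into a privacy part and an accuracy part, with the accuracy part being where the new tighter bound comes from.

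For privacy, I would observe that the noisy weight vector $\hat\vew$ is exactly the output of the standard Laplace mechanism applied to the identity function $\vew \mapsto \vew$. For $\simone$, the $\ell_1$-sensitivity of this map is $1$, so scale $b = 1/\eps$ makes the release of $\hat\vew$ itself $\eps$-differentially private. For $\siminf$, two neighboring weight vectors satisfy $\|\vew - \vew'\|_1 \le m \cdot \|\vew - \vew'\|_\infty \le m$, so the $\ell_1$-sensitivity is at most $m$ and scale $b = m/\eps$ suffices. Computing the MST of $(G,\hat\vew)$ is then a data-independent post-processing step, so the final output $T$ inherits $\eps$-differential privacy.

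For accuracy under $\simone$, I would first invoke the Laplace tail bound $\Pr[|\Lap(b)| > t] = e^{-t/b}$ together with a union bound over the $m$ edges to conclude that, with probability at least $1 - \gamma$, the noise $N(e) := \hat\vew(e) - \vew(e)$ satisfies $|N(e)| \le b\log(m/\gamma) = \log(m/\gamma)/\eps$ for every edge $e$ simultaneously. Condition on this event for the rest of the argument.

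The key step, which is where the $D$ rather than $n$ comes in, is the following comparison. Let $\hat T$ be the returned tree (MST under $\hat\vew$) and $T^*$ the true MST under $\vew$. Optimality of $\hat T$ for $\hat\vew$ gives $\hat\vew(\hat T) \le \hat\vew(T^*)$, which rearranges to
\[
	\vew(\hat T) - \vew(T^*) \le N(T^*) - N(\hat T) = \sum_{e \in T^* \setminus \hat T} N(e) - \sum_{e \in \hat T \setminus T^*} N(e).
\]
Each of the two sums on the right has at most $\ham(\hat T, T^*) \le \diam(G) = D$ terms, and under our conditioning each $|N(e)|$ is bounded by $\log(m/\gamma)/\eps$, so the total is at most $2D\log(m/\gamma)/\eps$. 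Using $m \le n^2$ to bound $\log(m/\gamma) \le 2\log(n/\gamma)$ yields the claimed $4D\log(n/\gamma)/\eps$.

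There is no real obstacle here; the only subtlety worth flagging is making sure the ``$D$ instead of $n$'' improvement is stated with respect to the true MST $T^*$ and not merely with respect to the tree-symmetric-difference bound on $\hat T$ alone, which is exactly why the proof factors through the Hamming diameter of $\trees(G)$ rather than through the noise on an arbitrary tree.
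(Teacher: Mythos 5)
Your proposal is correct and follows essentially the same route as the paper: privacy via the Laplace mechanism plus post-processing, accuracy via a union bound on per-edge noise, and the crucial tightening step of observing that the returned tree and the true MST differ in at most $D$ edges so only $2D$ noise terms contribute to the error. The only cosmetic difference is that you isolate the noise $N = \hat\vew - \vew$ and start from $\hat\vew(\hat T) \le \hat\vew(T^*)$, whereas the paper writes the chain of inequalities beginning from $\vew(T) - \vew(T^*)$; the algebra is identical.
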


The idea of our proof is analogous to the original proof by \citet{mst-laplace}, except that in the last step, instead of bounding the error incurred by all edges of $T$ and $T^*$, we bound the (smaller) error incurred by the $2D$ edges in which $T$ and $T^*$ differ.

\begin{proof}
	By the guarantees of the Laplace mechanism \cite{dwork2006calibrating}, we can privately release the noised weight vector, where for $\siminf$, we additionally use that $\|\vew - \vew'\|_1 \le m$ for $\vew \siminf \vew'$. The privacy of \cref{alg:laplace} then follows from the privacy of postprocessing.

	Fix $\gamma > 0$. For each edge $e$, it holds that $\Pr[\,|\vew(e) - \hat\vew(e)| \le \log (m/\gamma)/\eps\,] = 1 - \gamma / m$ by the
	definition of $\Lap(\cdot)$. By the union bound, the probability that $|\vew(e) - \hat\vew(e)| \le \log(m/\gamma)/\eps$ holds for
	all edges is at least $1 - \gamma$. Let us condition on this event.

	Let $T$ be the spanning tree returned by \cref{alg:laplace}, i.e., the MST
	of $(G, \hat\vew)$, and let $T^*$ be the MST of $(G, \vew)$. We can write
	\begin{align*}
		\vew(T) - \vew(T^*)
		&=
		\sum_{e \in T \setminus T^*} \vew(e) - \sum_{e\in T^* \setminus T} \vew(e)
		\\&
		\le
		\sum_{e \in T \setminus T^*} \hat\vew(e) - \sum_{e \in T^* \setminus T} \hat\vew(e)
		 + 2D\log(m/\gamma)/\eps
		\\&
		=
		\hat\vew(T) - \hat\vew(T^*)
		 + 2D\log(m/\gamma)/\eps
		\\&
		\le
		2D\log(m/\gamma)/\eps.
	\end{align*}
	We used, respectively, the fact that edges in $T \cap T^*$ do not contribute to the error, the fact that $|T \setminus T^*| = |T^* \setminus T| \le D$, rewriting, and the fact that $T$ is an MST under $\hat\vew$. Noting that $\log(m/\gamma) \le \log(n^2/\gamma) = 2\log(n/\gamma)$ finishes the proof.
\end{proof}

\begin{corollary}
	\label{cor:ub-laplace}
	Let $\algo$ denote \cref{alg:laplace} in the setting with $\simone$. It holds that
	\[
		\E_{T \sim \algo(G, \vew)}\left[\vew(T)\right] - \vew(T^*) \le \frac{4D(\log n+1)}{\eps},
	\]
	where $T^*$ is the MST of $(G, \vew)$ and $D = \diam(G)$.
\end{corollary}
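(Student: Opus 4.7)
The plan is to obtain the expectation bound by integrating the high-probability tail bound of \cref{thm:ub-laplace}. Since $T^*$ is the MST under $\vew$, the error $X \coloneqq \vew(T) - \vew(T^*)$ is a nonnegative random variable, so I may use the standard identity $\E[X] = \int_0^\infty \Pr[X > t]\,dt$.

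The first step is to convert the statement of \cref{thm:ub-laplace} into a tail bound on $X$. Given $t > 0$, I set $\gamma = \gamma(t)$ such that $4D\log(n/\gamma)/\eps = t$, which gives $\gamma(t) = n\exp(-t\eps/(4D))$. The theorem then yields $\Pr[X > t] \le \gamma(t) = n\exp(-t\eps/(4D))$, provided $\gamma(t) \le 1$, i.e., $t \ge t_0 \coloneqq 4D\log(n)/\eps$. For smaller $t$ I simply use the trivial bound $\Pr[X > t] \le 1$.

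The second step is the routine calculation:
\[
	\E[X] \le \int_0^{t_0} 1\,dt + \int_{t_0}^\infty n\exp\!\left(-\tfrac{\eps t}{4D}\right) dt = t_0 + \tfrac{4D}{\eps}\cdot n \cdot \exp(-\log n) = \tfrac{4D\log n}{\eps} + \tfrac{4D}{\eps},
\]
which equals $4D(\log n + 1)/\eps$, matching the claim.

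There is no real obstacle here; the only small subtlety is making sure the tail bound from \cref{thm:ub-laplace} is only applied in the regime where it is nontrivial (i.e., $\gamma \le 1$), and handling the small-$t$ range with the trivial bound so that the split integral converges to the stated constant. Privacy for $\simone$ is inherited directly from \cref{thm:ub-laplace} and requires no further argument.
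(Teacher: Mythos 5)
Your proof is correct and follows essentially the same approach as the paper: both convert the high-probability tail bound of \cref{thm:ub-laplace} into an expectation bound by exploiting the exponential decay of the tail. The paper packages this conversion as a standalone lemma proved via stochastic dominance by an $\mathrm{Exp}(1)$ variable, while you perform the equivalent layer-cake integration directly (using nonnegativity of the error, which the paper's lemma avoids needing); the resulting constant $4D(\log n + 1)/\eps$ is identical.
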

\begin{proof}
	By \cref{thm:ub-laplace}, we have
	\[
		\Pr_{T \sim \algo(G, \vew)}\left[\,\vew(T) - \vew(T^*) \le \frac{4D(\log n + \log (1/\gamma))}{\eps}\,\right] \ge 1 - \gamma,
	\]
	The claim then follows from the following lemma, with $a = 4D / \eps$ and $b = \log n$.
\end{proof}

\begin{lemma}
	Let $X$ be a random variable and let us have $a > 0$, $b \in \R$, such that for every $\gamma \in (0, 1]$, we have $\Pr[X \le a(\log(1/\gamma) + b)] \ge 1 - \gamma$. Then $\E[X] \le a(b + 1)$.
\end{lemma}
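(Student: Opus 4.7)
The plan is to combine the standard tail-integral formula for the expectation with a convenient shift so that the hypothesis applies uniformly over the range of integration. Specifically, I would start from the pointwise inequality $X \le ab + (X - ab)_+$, which holds for any random variable and any constant $ab$, and take expectations to get
\[
	\E[X] \;\le\; ab \;+\; \E\bigl[(X - ab)_+\bigr] \;=\; ab \;+\; \int_0^\infty \Pr[X > ab + t]\,dt.
\]
Shifting by exactly $ab$ (rather than by $0$) is what lets me avoid a case split on the sign of $b$: the tail bound I will derive from the hypothesis is naturally expressed in terms of how far above $ab$ the variable $X$ lies.

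Next I would translate the hypothesis into a clean exponential tail bound. Setting $\gamma = e^{-t/a}$ in the given inequality, so that $\gamma$ ranges over $(0, 1]$ exactly as $t$ ranges over $[0, \infty)$, the threshold $a(\log(1/\gamma) + b)$ becomes $ab + t$. Hence
\[
	\Pr[X > ab + t] \;\le\; e^{-t/a} \qquad \text{for every } t \ge 0.
\]
Plugging this into the integral gives $\E[(X - ab)_+] \le \int_0^\infty e^{-t/a}\,dt = a$, and combining with the first displayed inequality yields $\E[X] \le ab + a = a(b+1)$, as desired.

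There is no real technical obstacle here; the only thing to watch is that the substitution $\gamma = e^{-t/a}$ stays within $(0,1]$, which is precisely why I chose the shift $ab$ so that $t$ starts at $0$. This keeps the argument uniform in the sign of $b$ (important since in the intended application $b = \log n$ is positive, but the lemma is stated for arbitrary $b \in \R$).
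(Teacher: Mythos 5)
Your proof is correct and is essentially the same argument as the paper's, just phrased differently: you substitute $\gamma = e^{-t/a}$ to turn the CDF hypothesis into an exponential tail bound and then integrate the tail, whereas the paper normalizes to $Y = X/a - b$, observes that $Y$ is stochastically dominated by an $\mathrm{Exp}(1)$ variable, and concludes $\E[Y]\le 1$. Your version is marginally more self-contained (you compute the tail integral directly rather than invoking stochastic dominance as a black box), and your decomposition $X \le ab + (X-ab)_+$ plays exactly the role that the paper's shift $Y = X/a - b$ plays in handling the possibility that $X$ falls below $ab$.
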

\begin{proof}
	Equivalently, we can write
	\[
		\Pr\left[\,\frac Xa - b \le -\log \gamma\,\right] \ge 1 - \gamma.
	\]
	Now let $x \in \R$ be such that $\gamma = \exp(-x)$. We have
	\[
		\Pr\left[\,\frac Xa - b \le x\,\right] \ge 1 - \exp(-x).
	\]
	Denote $Y = \frac Xa - b$ and let $Z \sim \text{Exp}(1)$, where $\text{Exp}(\lambda)$ is the exponential distribution. For any $x \in \R$, we have $\Pr[Y \le x] \ge 1 - \exp(-x) = \Pr[Z \le x]$, i.e., $Z$ stochastically dominates $Y$, and thus $\E[Y] \le \E[Z] = 1$. Thus, by linearity of expectation, $\E[X] = a(\E[Y] + b) \le a(b + 1)$, as needed.
\end{proof}

\section{Lower Bounds for MST}
\label{sec:lower-bound}

In this section, we focus on proving the lower bound. For $\simone$, this is a lower bound that nearly matches the performance of \Cref{alg:laplace}, for $\siminf$ it nearly matches the performance of the mechanism from \Cref{sec:upper-bound}.

In both cases, there is a $\log n$ gap between the lower and upper bounds. This gap is inherent to our approach, but for some graphs, and in particular, for $G = K_n$, we can make it disappear and get worst-case lower bounds of $\Omega(n\log n/\eps)$ (for $\simone$) and $\Omega(n^2\log n/\eps)$ (for $\siminf$) that match our worst-case upper bounds from \cref{sec:upper-bound} up to a constant factor, as well as that of \citet{mst-laplace} in the case of $\simone$. Overall, we prove the following theorem:
\begin{theorem}
	\label{thm:lb-main}
	Fix an unweighted graph $G$ and let $\algo$ be mechanism for MST on $G$. If $\algo$ is $\eps$-differentially private with respect to $\simone$, then there exist weights $\vew$ such that
	\[
		\E_{T\sim \algo(G,\vew)} [\vew(T)] = \vew(T^*) + \Omega\left(\frac{D}{\eps}\right) - \O(1),
	\]
	where $D = \diam(G)$.
	Moreover, there exists at least one $G$ and $\vew$ where
	\[
		\E_{T\sim \algo(G,\vew)} [\vew(T)]
		=
		\vew(T^*) + \Omega\left(\frac{n \log n}{\eps}\right) - \O(1)
		.
	\]

	If $\algo$ is $\eps$-differentially private with respect to $\siminf$ instead, then, for every fixed $G$, there exist weights $\vew$ such that
	\[
		\E_{T\sim \algo(G,\vew)} [\vew(T)] = \vew(T^*) + \Omega\left(\frac{D^2}{\eps}\right) - \O(D).
	\]
	Moreover, there exists at least one $G$ and $\vew$ where
	\[
		\E_{T\sim \algo(G,\vew)} [\vew(T)]
		=
		\vew(T^*) + \Omega\left(\frac{n^2 \log n}{\eps}\right) - \O(n)
		.
	\]

\end{theorem}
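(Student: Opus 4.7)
The plan is to reduce all four bounds to a single packing-based DP lemma and then instantiate the packing with weight vectors drawn from \cref{def:oneweights}. First, I would state and prove the auxiliary packing lemma: if $\W$ is any family of weight vectors whose $x$-light sets $\light_\vew \subseteq \trees(G)$ are pairwise disjoint and whose diameter in the neighbor metric is $r$, then every $\eps$-differentially private mechanism $\algo$ has expected error $\Omega(x\log|\W|/(r\eps))$ on some $\vew \in \W$. The core observation is that for any $\vew_0, \vew' \in \W$, group privacy gives $\Pr[\algo(\vew_0) \in \light_{\vew'}] \le e^{\eps r}\Pr[\algo(\vew') \in \light_{\vew'}]$, and disjointness of the $\light_\vew$ gives $\sum_{\vew' \in \W}\Pr[\algo(\vew_0) \in \light_{\vew'}] \le 1$; combining these produces some $\vew$ on which $\algo$ misses $\light_\vew$ with high probability. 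The additive $-\O(1)$ and $-\O(D)$ slack in the theorem absorbs the boundary behavior when $\log|\W|$ is comparable to $\eps r$.

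Next I would instantiate this lemma through the reduction to dissimilar spanning trees outlined in the technical overview. Given $S \subseteq \trees(G)$ whose members are pairwise at $\ham$-distance greater than $d$, set $\W \coloneqq \{\oneweights_T : T \in S\}$. By \cref{fact:oneweights-properties} the error of $T'$ under $\oneweights_T$ equals $\ham(T,T')$, so the $(d/2)$-light set around $\oneweights_T$ is the Hamming ball $B(T,d/2)$; the triangle inequality together with pairwise distance $>d$ forces these balls to be disjoint, allowing $x \coloneqq d/2$. The same fact yields $\|\oneweights_{T_1} - \oneweights_{T_2}\|_1 \le 2D$ and $\|\oneweights_{T_1} - \oneweights_{T_2}\|_\infty \le 1$, giving $r \le 2D$ under $\simone$ and $r \le 1$ under $\siminf$. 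Plugging into the lemma reduces the whole theorem to a combinatorial packing problem: maximise $d\log|S|$ subject to $S \subseteq \trees(G)$ having pairwise Hamming distance $> d$, with target expected-error lower bounds $\Omega(d\log|S|/(D\eps))$ and $\Omega(d\log|S|/\eps)$ respectively.

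For the per-graph bounds I would appeal to the Gilbert--Varshamov bound applied to a combinatorial structure inside $\trees(G)$ that \cref{sec:spanning-trees} will expose, namely $2^D$ spanning trees that isometrically embed the Hamming cube $\{0,1\}^D$ into $(\trees(G), \ham)$. Standard Gilbert--Varshamov then produces $|S| = 2^{\Omega(D)}$ trees with pairwise distance $d = \Omega(D)$, so $d\log|S| = \Omega(D^2)$, which yields the per-graph bounds $\Omega(D/\eps)$ and $\Omega(D^2/\eps)$ after substitution. For the worst-case bounds I would take $G = K_n$, where $D = n-1$ and Cayley's formula gives $|\trees(K_n)| = n^{n-2}$, and construct $|S| = n^{\Omega(n)}$ spanning trees at pairwise distance $d = \Omega(n)$ by a probabilistic argument: two independent uniform random labeled spanning trees (sampled via uniform Prüfer sequences or the Aldous--Broder walk) share only $\O(1)$ edges in expectation, and a concentration plus alteration argument in the Gilbert--Varshamov style then extracts a packing of the claimed size. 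This gives $d\log|S| = \Omega(n^2\log n)$, from which the two worst-case bounds $\Omega(n\log n/\eps)$ and $\Omega(n^2\log n/\eps)$ follow.

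The reduction in the second paragraph is essentially algebraic, so I expect the main technical effort to sit at the two ends of the argument. On the packing-lemma side, the elementary group-privacy calculation only gives $\Omega(x)$ once $\log|\W| \gtrsim \eps r$, and extracting the full $\log|\W|/(\eps r)$ factor requires a sharper averaging (e.g.\ via mutual information or an iterated subdivision of the packing); this is where the $-\O(1)$ and $-\O(D)$ slacks in the theorem originate. The main obstacle, however, is the $K_n$ construction: the desired packing has size close to the total spanning-tree count, so a careful probabilistic analysis of $\ham$ on uniformly random spanning trees of $K_n$ is needed to ensure that a random family of $n^{\Omega(n)}$ trees is actually at pairwise Hamming distance $\Omega(n)$ with positive probability.
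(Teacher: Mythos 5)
Your high-level plan matches the paper's architecture closely: the same packing lemma (the paper cites it as Lemma~\ref{lem:lb-packing} and restates it for MST as Lemma~\ref{lem:lb-packing-mst}), the same $T \mapsto \oneweights_T$ reduction to finding many mutually dissimilar spanning trees (Lemma~\ref{lem:many-trees-imply-lb}), and the same Gilbert--Varshamov route for the per-graph bounds (Lemmas~\ref{lem:code-to-trees} and~\ref{lem:good-block-code}). There are, however, two places where your proposal deviates, one of which is a genuine gap.

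The gap is in your auxiliary packing lemma and, more importantly, in how you instantiate it. You assert that the packing argument yields an error lower bound of $\Omega(x\log|\W|/(r\eps))$, but the group-privacy/disjointness calculation only gives $\Pr[\text{error} \le x] \le e^{r\eps}/|\W|$, hence $\E[\text{error}] \ge x\,(1 - e^{r\eps}/|\W|)$, which is merely $\Omega(x)$ once $\log|\W|$ exceeds $r\eps$ by a constant -- the extra slack in $|\W|$ does not amplify the bound. You notice this yourself in the last paragraph, but the repairs you suggest (mutual information, iterated subdivision) are not what is needed and are unlikely to work directly; in particular the light sets $\light_\vew$ are only guaranteed disjoint at threshold $x$, so no averaging trick can retroactively enlarge $x$. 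The paper's fix is much simpler: it scales the weight vectors, taking $\W = \{\alpha\cdot\oneweights_T : T \in S\}$ for a free parameter $\alpha > 0$. Scaling multiplies both $x$ (which becomes $\alpha d/2$) and $r$ (which becomes $\lceil 2\alpha D\rceil$ for $\simone$ and $\lceil\alpha\rceil$ for $\siminf$) by $\alpha$, while leaving the disjointness of the light sets intact, because disjointness depends only on the $\ham$-separation of $S$, not the scale. Optimizing $\alpha \approx \log|S|/(r_0\eps)$ where $r_0$ is the unscaled diameter then delivers exactly the claimed $\Omega(d\log|S|/(D\eps))$ and $\Omega(d\log|S|/\eps)$ bounds, and the $-\O(1)$ and $-\O(D)$ slacks come from the ceiling $\lceil\cdot\rceil$ in the neighbor-metric distance, not from any boundary behavior of an averaging argument. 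Without the $\alpha$-scaling, your instantiation with $\W = \{\oneweights_T\}$ fixes the scale, so you would only recover $\Omega(d/D)$ for $\simone$ and $\Omega(d)$ for $\siminf$, missing the $\log|S|/\eps$ factor entirely.

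The second deviation is your $K_n$ construction, which is genuinely different from the paper's and need not be wrong, but is harder than necessary. You propose sampling uniformly random spanning trees and arguing, via concentration of the shared-edge count plus an alteration step, that $n^{\Omega(n)}$ of them are pairwise at Hamming distance $\Omega(n)$. This requires a tail bound of the form $\Pr[|T_1\cap T_2| \ge cn] \le n^{-\Omega(n)}$ for independent uniform spanning trees; this is plausible (negative association of uniform-spanning-tree edge indicators gives Chernoff-type tails), but you would need to carry that through explicitly. The paper sidesteps probability entirely via a crude deterministic volume bound (Lemma~\ref{lem:ball-bound}): the number of spanning trees within Hamming distance $d$ of any fixed $T$ is at most $m^d n^d$, so a greedy packing (Lemma~\ref{lem:tree-set-by-ball-cutting}) yields $|S| \ge |\trees(G)|/(m^dn^d)$. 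Plugging in Cayley's $|\trees(K_n)| = n^{n-2}$ with $d = (n-2)/6$ immediately gives $|S| \ge \sqrt{|\trees(K_n)|} = 2^{\Omega(n\log n)}$. The deterministic route is both shorter and avoids the need for concentration inequalities for uniform spanning trees. Finally, a small inaccuracy: the Hamming-cube embedding from Lemma~\ref{lem:code-to-trees} is only bi-Lipschitz with distortion~2, not isometric, but this does not affect the Gilbert--Varshamov step.
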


The rough outline of the proof is as follows: in \cref{lem:lb-packing,lem:lb-packing-mst}, we state (and restate in a language of MSTs) a general packing-based lower bound that applies whenever we can find many inputs (in our case, weight vectors) close in the metric induced by the neighbor relation such that, for each input, the set of all outputs (in our case, spanning trees) that give small error with respect to this input, is disjoint with all the other sets. In \cref{lem:many-trees-imply-lb}, we prove that instead of searching for a good set of weight vectors, we can search for a large set of spanning trees in which every two spanning trees differ in many edges.
\cref{lem:many-trees-general,lem:many-trees-clique} prove that such a large set always exists, with the latter being a special case providing stronger guarantees when the graph is a clique. We postpone their proofs to \cref{sec:spanning-trees}, where we also build the needed theory.

At the core of our lower bound is the following lemma, based on a general packing argument.

\begin{lemma}[\citet{dp-complexity}, Theorem 5.13]
	\label{lem:lb-packing}
	Let $\mathcal C \subseteq \mathcal X$ be a collection of datasets all at (neighbor-relation-induced) distance at most $r$ from some fixed dataset $x_0 \in \mathcal X$, and let $\{\light_x\}_{x \in \mathcal C}$ be a collection of \emph{disjoint} subsets of $\mathcal Y$. If there is an $\eps$-differentially private mechanism $\mathcal M : \mathcal X \to \mathcal Y$ such that $\Pr[\mathcal M(x) \in \light_x] \ge p$ for every $x \in \mathcal C$, then
	\[
		p \le \frac{e^{r\eps}}{|\mathcal C|}.
	\]
\end{lemma}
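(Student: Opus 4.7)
The plan is to apply the standard group-privacy packing argument relative to the reference dataset $x_0$. I would first extend the $\eps$-differential privacy guarantee from single-step neighbors to distance-$r$ neighbors: by chaining the DP inequality along a shortest chain of at most $r$ neighbor hops from $x_0$ to $x$, one obtains that for every $x \in \mathcal C$ and every measurable $S \subseteq \mathcal Y$,
\[
    \Pr[\mathcal M(x) \in S] \le e^{r\eps}\cdot \Pr[\mathcal M(x_0) \in S].
\]
Taking $S = \light_x$ and using the hypothesis $\Pr[\mathcal M(x) \in \light_x] \ge p$ then gives
\[
    \Pr[\mathcal M(x_0) \in \light_x] \ge e^{-r\eps}\cdot p \qquad \text{for every } x \in \mathcal C.
\]

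Next, I would invoke the disjointness of the sets $\{\light_x\}_{x \in \mathcal C}$: since they are pairwise disjoint subsets of $\mathcal Y$, their probabilities under the distribution of $\mathcal M(x_0)$ must sum to at most one. Combining this with the previous inequality yields
\[
    1 \;\ge\; \Pr\!\left[\mathcal M(x_0) \in \bigcup_{x \in \mathcal C} \light_x\right] \;=\; \sum_{x \in \mathcal C} \Pr[\mathcal M(x_0) \in \light_x] \;\ge\; |\mathcal C|\cdot e^{-r\eps}\cdot p,
\]
which rearranges to $p \le e^{r\eps}/|\mathcal C|$, as required.

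The only delicate point is the group-privacy extension in the first step: one must justify that ``distance at most $r$'' in the neighbor-relation-induced metric translates into a chain of at most $r$ single-step neighbor hops (possibly rounded up to $\lceil r\rceil$ for non-integer $r$), so that telescoping the $\eps$-DP inequality across this chain multiplies to the factor $e^{r\eps}$. This is simply the definition of the induced metric, so the extension is immediate; everything else is a one-line use of the DP definition and of countable additivity.
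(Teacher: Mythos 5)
Your proof is correct, and it is the standard group-privacy packing argument. Note, though, that the paper itself does not prove this lemma: it is imported verbatim from \citet{dp-complexity} (Theorem 5.13), so there is no ``paper's own proof'' to compare against. Your two-step argument --- chain $\eps$-DP along a path of length at most $r$ from $x_0$ to each $x \in \mathcal C$ to get $\Pr[\mathcal M(x_0) \in \light_x] \ge e^{-r\eps} p$, then sum over the pairwise-disjoint sets $\light_x$ to bound the total probability by $1$ --- is exactly the proof one finds in the literature for this type of statement. Your remark about rounding $r$ up to an integer when reading off a chain of neighbor hops is also the right caveat; in the paper's application (\cref{lem:lb-packing-mst}) the quantity $r$ is explicitly taken in $\N$, so the point is moot there, but it is good hygiene to flag it.
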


The following rephrases \cref{lem:lb-packing} in the language of our problem. It also picks a specific way of constructing the sets $\mathcal L_\vew$, namely, by including precisely those outputs that have a ``small enough'' error with respect to $\vew$.

\begin{lemma}[Adaptation of \cref{lem:lb-packing} for MST]
	\label{lem:lb-packing-mst}
	Fix an unweighted graph $G = (V, E)$ and let $\sim$ be an arbitrary neighbor relation on $\R^E$. Given $\W \subseteq \R^E$ a collection of weight vectors, all at ($\sim$-induced) distance at most $r \in \N$ from some fixed weight vector $\vew_0 \in \R^E$, and a parameter $x \ge 0$, denote for each $\vew \in \W$ the MST of $(G, \vew)$ by $T^*_\vew$, and define
	\[
		\light_\vew \coloneqq \{\, T \in \trees(G) \mid \vew(T) \le \vew(T^*_\vew) + x \,\}
	\]
	the set of spanning trees that are ``light'' under $\vew$.
	Assume that $\W$ and $x$ are such that all $\light_\vew$ are pairwise disjoint. Then for any $\eps$-differentially private (with respect to $\sim$) mechanism $\mathcal M : \R^E \to \trees(G)$, there exist weights $\vew \in \W$, such that
	\[
		\Pr_{T \sim \mathcal M(\vew)}[\,\vew(T) \le \vew(T^*_\vew) + x\,]
		\le
		\frac{e^{r\eps}}{|\W|}.
	\]
\end{lemma}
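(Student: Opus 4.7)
The plan is to apply the general packing lemma (\cref{lem:lb-packing}) directly, with a careful instantiation of its parameters. I would take $\mathcal X = \R^E$, $\mathcal Y = \trees(G)$, the collection $\mathcal C$ to be $\W$, the fixed reference dataset to be $\vew_0$, and the disjoint family $\{\light_x\}_{x \in \mathcal C}$ to be the family $\{\light_\vew\}_{\vew \in \W}$ from the statement. The first step is then to verify the two hypotheses of \cref{lem:lb-packing}: every $\vew \in \W$ has $\sim$-induced distance at most $r$ from $\vew_0$ by assumption, and the sets $\light_\vew$ are pairwise disjoint, which is the central assumption of the lemma I am proving. The mechanism $\mathcal M$ is $\eps$-differentially private with respect to $\sim$, which is exactly the privacy hypothesis needed.

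The second step is to note the trivial identification of probabilities: by the definition of $\light_\vew$, for any $\vew \in \W$,
\[
	\Pr[\mathcal M(\vew) \in \light_\vew] = \Pr_{T \sim \mathcal M(\vew)}\bigl[\vew(T) \le \vew(T^*_\vew) + x\bigr],
\]
so the quantity controlled by \cref{lem:lb-packing} is exactly the quantity appearing in the conclusion.

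The final step is to convert the ``for all $\vew$'' form of \cref{lem:lb-packing} into the ``there exists $\vew$'' form asked for here. I would set $p^\star \coloneq \inf_{\vew \in \W} \Pr[\mathcal M(\vew) \in \light_\vew]$. Then $\Pr[\mathcal M(\vew) \in \light_\vew] \ge p^\star$ for every $\vew \in \W$, so \cref{lem:lb-packing} applied with $p = p^\star$ yields $p^\star \le e^{r\eps}/|\W|$. Picking any $\vew \in \W$ achieving (or approaching, if the infimum is not attained) this bound gives the desired weights.

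I do not anticipate any real technical obstacle: the proof is a translation of notation together with a one-line infimum argument. The only minor subtlety is that \cref{lem:lb-packing} is a uniform bound over $\mathcal C$, so one must explicitly invoke it at the worst-case $\vew$; everything else, including the disjointness and the distance hypothesis, is handed to us by the statement being proved.
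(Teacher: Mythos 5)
Your proposal is correct, and it takes exactly the route the paper intends: the paper does not even write out a proof of \cref{lem:lb-packing-mst}, presenting it only as a ``rephrasing'' of \cref{lem:lb-packing} with the sets $\light_\vew$ chosen as the $x$-error balls, which is precisely the instantiation you give together with the standard minimum-over-$\W$ step to pass from the uniform bound to the existential conclusion.
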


%

It turns out that there is a very natural class of sets $\W$ that give a reasonably good lower bound. Namely, the $\W$ we use in our lower bound will be of the form $\W \subseteq \{0, \alpha\}^E$ for some choice of $\alpha$. This is formalized in the following statement. It says that to create our $\W$, we can start with any $S \subseteq \trees(G)$, and then use the mapping $T \to \alpha \cdot \oneweights_T$ (recall that $\oneweights_T$ is the indicator function of a set $E \setminus T$, see \cref{def:oneweights}). The quality of the lower bound provided by $S$ depends solely on two things: the size of $S$, and the minimum Hamming distance between distinct $T_1, T_2 \in S$. The parameter $\alpha > 0$ can then be used to control the tradeoff between $x$ and the probability that the error will be smaller than $x$ in \cref{lem:lb-packing-mst}.

\begin{lemma}
	\label{lem:many-trees-imply-lb}
	Fix an unweighted graph $G$ and denote $D = \diam(G)$. Given a set of spanning trees $S \subseteq \trees(G)$, let $d > 0$ be such that $\ham(T_1, T_2) > d$ for all $T_1, T_2 \in S$. Finally, let $\algo$ be a mechanism for releasing a minimum spanning tree of $G$. If $\algo$ is $\eps$-differentially private with respect to $\simone$, then there exist weights $\vew$ such that
	\begin{align*}
		\Pr_{T \sim \mathcal A(G, \vew)}\left[\,\vew(T) \le \vew(T^*) + \frac dD \left(\frac{\log|S|}{8\varepsilon} - \frac{1}{4}\right)\,\right]
		\le \frac{1}{\sqrt{|S|}}.
	\end{align*}
	If $\algo$ is $\eps$-differentially private with respect to $\siminf$, then there exist weights $\vew$ such that
	\begin{align*}
		\Pr_{T \sim \mathcal A(G, \vew)}\left[\,\vew(T) \le \vew(T^*) + d \left(\frac{\log|S|}{4\varepsilon} - \frac{1}{2}\right)\,\right]
		\le \frac{1}{\sqrt{|S|}}.
	\end{align*}
\end{lemma}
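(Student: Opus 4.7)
The plan is to apply \cref{lem:lb-packing-mst} to the family $\W = \{\alpha \cdot \oneweights_T : T \in S\}$, where $\alpha > 0$ is a scaling parameter to be chosen (differently for the two neighbor relations). This reduces the proof to two ingredients: an upper bound $r$ on the $\sim$-induced diameter of $\W$, and the largest $x$ for which the sets $\{\light_{\vew}\}_{\vew \in \W}$ remain pairwise disjoint. Once those are fixed, I pick $\alpha$ so that the bound $e^{r\eps}/|\W|$ from \cref{lem:lb-packing-mst} is at most $1/\sqrt{|S|}$.

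By \cref{fact:oneweights-properties}, writing $\vew_T = \alpha \cdot \oneweights_T$, we have $\vew_T(T') = \alpha \cdot \ham(T, T')$. Hence the MST under $\vew_T$ is $T$ itself with weight $0$, and $\light_{\vew_T}$ is exactly the Hamming ball of radius $x/\alpha$ around $T$. A standard triangle-inequality argument on $\ham$ then shows these balls are pairwise disjoint as long as $x \le \alpha d/2$, using the hypothesis $\ham(T_1, T_2) > d$ for distinct $T_1, T_2 \in S$; I therefore take $x = \alpha d/2$. For the diameter, \cref{fact:oneweights-properties} gives $\|\vew_{T_1} - \vew_{T_2}\|_1 \le 2\alpha D$ and $\|\vew_{T_1} - \vew_{T_2}\|_\infty \le \alpha$, so, fixing an arbitrary $T_0 \in S$ as the reference point $\vew_0 = \vew_{T_0}$, the $\sim$-induced distance from $\vew_0$ to any $\vew_T \in \W$ is at most $\lceil 2\alpha D \rceil$ for $\simone$ and at most $\lceil \alpha \rceil$ for $\siminf$.

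The requirement $e^{r\eps}/|S| \le 1/\sqrt{|S|}$ is equivalent to $r \le \log|S|/(2\eps)$. Using the elementary bound $\lceil y - 1 \rceil \le y$, the choice $\alpha = \log|S|/(4\eps D) - 1/(2D)$ works for $\simone$, and substituting into $x = \alpha d/2$ yields precisely $x = (d/D)(\log|S|/(8\eps) - 1/4)$, as stated. For $\siminf$, the analogous choice $\alpha = \log|S|/(2\eps) - 1$ gives $r = \lceil \alpha \rceil \le \log|S|/(2\eps)$ and $x = \alpha d/2 = d(\log|S|/(4\eps) - 1/2)$, again matching the statement. In the corner case where the chosen $\alpha$ is nonpositive, the stated bound on $x$ is also nonpositive and the claim is vacuous, since $\vew(T) \ge \vew(T^*)$ with probability one.

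The only delicate point in the above plan is bookkeeping the ceilings in $r$ so that the constants in $x$ come out exactly as in the statement; this is precisely what forces the $-1/4$ and $-1/2$ correction terms. Apart from that, the argument is a direct application of the packing lemma to a weight-vector family indexed by $S$ via the zero-one map $T \mapsto \oneweights_T$, after which the triangle inequalities for $\ham$ and for the $\ell_1$/$\ell_\infty$ norms do all the remaining work.
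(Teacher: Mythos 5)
Your proof is correct and follows essentially the same route as the paper: instantiate \cref{lem:lb-packing-mst} with $\W = \{\alpha\oneweights_T : T\in S\}$, use \cref{fact:oneweights-properties} to compute the $\ell_1$/$\ell_\infty$ diameters and to show the $\light_\vew$ balls are disjoint when $x=\alpha d/2$, then tune $\alpha$ so that $e^{r\eps}/|S|\le 1/\sqrt{|S|}$. The only cosmetic difference is the ceiling bookkeeping (you use $\lceil y-1\rceil\le y$, the paper uses $\lceil y\rceil\le y+1$), which yields the identical final constants.
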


\begin{proof}
	Let us walk through the proof with the assumption that $\algo$ is $\eps$-differentially private with respect to $\simone$. At the very end, we will discuss the needed changes for $\siminf$.

	Fix a parameter $\alpha \in \R$ and define $\W$ as $\{\,
	\alpha\cdot \oneweights_T \mid T \in S \,\}$. Later we will specify the value of
	$\alpha$ that gives the needed bound.

	Fix $\vew_0 = \alpha\oneweights_{T_0} \in \W$ arbitrarily. In order to apply \cref{lem:lb-packing-mst}, we need to determine $r$ and $x$. For each $\vew = \alpha\oneweights_T \in \W$, we have $\|\vew - \vew_0\|_1 = 2\alpha\ham(T, T_0) \le 2\alpha D$, and thus $r = \lceil 2\alpha D\rceil$.

	Set $x = \alpha d / 2$. Then for every $\vew = \alpha\oneweights_T \in \W$, we have:
	\begin{align*}
		\light_\vew
		&=
		\{\,T' \in \trees(G) \mid \vew(T') < \vew(T) + x\,\}
		=
		\{\,T' \in \trees(G) \mid \alpha\oneweights_T(T') < \alpha d/2 \,\}
		\\
		&=
		\{\,T' \in \trees(G) \mid |T' \setminus T| < d/2 \,\}
		=
		\{\,T' \in \trees(G) \mid \ham(T, T') < d/2 \,\}.
	\end{align*}

	We can conclude that all $\light_\vew$ are disjoint: If we had $T \in \light_{\vew_1} \cap \light_{\vew_2}$ for distinct $\vew_1 = \alpha\oneweights_{T_1} \in \W$ and $\vew_2 = \alpha\oneweights_{T_2} \in \W$, then we would have $\ham(T_1, T_2) \le \ham(T_1, T') + \ham(T', T_2) < d$, which would contradict the properties of $d$, as $T_1, T_2 \in S$.

	Now we apply \cref{lem:lb-packing-mst} to get that, for every $\eps$-DP mechanism for releasing the MST of $G$ there exist weights $\vew \in \W \subseteq \R^E$, such that
	\begin{align*}
		\Pr_{T \sim \mathcal A(G, \vew)}\left[\,\vew(T) \le \vew(T^*) + \frac{\alpha d}{2}\,\right]
		&\le \frac{\exp(m\eps)}{|S|} = \frac{\exp(\lceil 2\alpha D\rceil\eps)}{|S|} \le \frac{\exp((2\alpha D + 1)\eps)}{|S|}.
	\end{align*}
	Setting $\alpha = 1/4 \cdot \log |S| / (\eps D) - 1/2D$ yields:
	\begin{align*}
		\Pr_{T \sim \mathcal A(G, \vew)}\left[\,\vew(T) \le \vew(T^*) + \frac dD \left(\frac{\log|S|}{8\varepsilon} - \frac{1}{4}\right)\,\right]
		\le \frac{1}{\sqrt{|S|}}.
	\end{align*}

	Finally, let us deal with the case when $\algo$ is $\eps$-differentially private with respect to $\siminf$ instead. The idea of the proof is identical and we only point out the differences. When calculating $r$, we now get that $\|\vew - \vew_0\|_\infty \le \alpha$, and thus $r = \lceil \alpha\rceil$. By \cref{lem:lb-packing-mst}, and as $\lceil \alpha\rceil \le \alpha + 1$, there exist weights $\vew$ such that
	\begin{align*}
		\Pr_{T \sim \mathcal A(G, \vew)}\left[\,\vew(T) \le \vew(T^*) + \frac{\alpha d}{2}\,\right]
		&\le\frac{\exp((\alpha + 1)\eps)}{|S|}.
	\end{align*}
	Setting $\alpha = 1/2 \cdot \log |S| / \eps - 1$ yields:
	\begin{align*}
		\Pr_{T \sim \mathcal A(G, \vew)}\left[\,\vew(T) \le \vew(T^*) + d\left(\frac{\log|S|}{4\varepsilon} - \frac{1}{2}\right)\,\right]
		\le \frac{1}{\sqrt{|S|}}.
	\end{align*}
\end{proof}

\Cref{lem:many-trees-imply-lb} reduces our problem into a problem of finding a set $S \subseteq \trees(G)$, which is as large as possible, and at the same time, the Hamming distance between any two spanning trees in the set is large. In \cref{sec:spanning-trees}, we prove the following two results:

\begin{lemma}
	\label{lem:many-trees-general}
	For any unweighted graph $G$ with $\diam(G) = D$, there exists a set $S \subseteq \trees(G)$ of size $2^{\Theta(D)}$ such that for any distinct $T_1, T_2 \in S$, we have $\ham(T_1, T_2) = \Theta(D)$.
\end{lemma}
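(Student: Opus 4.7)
The plan follows the outline in \cref{sec:techniques}: build a family of spanning trees of $G$ indexed by binary strings of length $\Theta(D)$ in a Hamming-distance-preserving way, then restrict to a Gilbert-Varshamov code.

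\textbf{Step 1 (Dense subgraph).} Pick any two spanning trees $T_1, T_2 \in \trees(G)$ with $\ham(T_1, T_2) = D$; these exist by definition of $D = \diam(G)$. Let $H'$ be the multigraph obtained from the subgraph of $G$ on $T_1 \cup T_2$ by contracting every edge of $T_1 \cap T_2$. Then $H'$ has $D + 1$ vertices, $2D$ edges, and the images $R_1$ (of $T_1 \setminus T_2$) and $R_2$ (of $T_2 \setminus T_1$) are two edge-disjoint spanning trees of $H'$. Every spanning tree of $H'$ lifts uniquely to a spanning tree of $G$ by re-adding the contracted edges, and this lift preserves $\ham$-distances, so it suffices to construct $S$ inside $\trees(H')$.

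\textbf{Step 2 (Parametrization).} Exhibit $k = \Theta(D)$ ``swap pairs'' $(e_i, f_i)$, with $e_i \in R_2$ and $f_i \in R_1$, such that for every $A \subseteq [k]$ the set $T_A := (R_1 \setminus \{f_i : i \in A\}) \cup \{e_i : i \in A\}$ is a spanning tree of $H'$. Then the map $\chi_A \mapsto T_A$ gives $2^k$ distinct spanning trees with $\ham(T_A, T_B) = |A \triangle B|$ by direct inspection, since $T_A$ and $T_B$ differ exactly on the pairs indexed by $A \triangle B$, with one edge of each such pair appearing in each tree.

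\textbf{Step 3 (Gilbert--Varshamov and conclusion).} By the Gilbert--Varshamov bound, for any fixed $\delta \in (0, 1/2)$ with $h(\delta) < 1$ (where $h$ is the binary entropy function), there is a binary code $C \subseteq \{0,1\}^k$ with $|C| \ge 2^{(1 - h(\delta)) k} = 2^{\Omega(D)}$ codewords and pairwise Hamming distance at least $\delta k = \Omega(D)$. Lifting $\{T_A : \chi_A \in C\}$ back to $\trees(G)$ yields a set $S$ with $|S| = 2^{\Omega(D)}$ and pairwise $\ham$-distance $\Theta(D)$, as required.

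\textbf{Main obstacle.} The combinatorial claim of Step 2 is the crux. Graphic matroids are not strongly base-orderable in general --- one can directly check that no bijection between the two edge-disjoint spanning trees of $K_4$ satisfies the simultaneous exchange property on every subset --- so the swap pairs cannot be produced by a generic matroid axiom. The construction must exploit the specific structure of $H'$ (a multigraph on $D + 1$ vertices with exactly $2D$ edges and two edge-disjoint spanning trees), which is restrictive enough that one should be able to guarantee $\Theta(D)$ compatible swap pairs. A natural approach is induction on $D$: at each step locate a single ``safe'' swap pair that can be toggled independently of all the others, perform the corresponding reduction on $H'$, and recurse on a smaller instance.
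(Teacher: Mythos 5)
Your high-level plan — embed a Gilbert--Varshamov code into $\trees(G)$ via an exchange-based parametrization between two diametrically opposed trees $T_a, T_b$ — is the same as the paper's. But the parametrization you use is genuinely different, and that difference is exactly where your proof has a gap.

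Your Step~2 asks for a fixed system of $\Theta(D)$ swap pairs $(e_i, f_i)$ so that \emph{every} subset of swaps simultaneously yields a spanning tree. As you correctly observe, this is a strong base-orderability-type statement that fails in general for graphic matroids, and you leave its proof as a ``main obstacle'' with only a vague appeal to induction. That is a real gap, not a cosmetic one: the whole construction depends on it, and it is not clear that $\Omega(D)$ compatible swap pairs always exist without a nontrivial argument.

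The paper avoids this obstacle entirely by demanding less of the parametrization. Its iterated exchange lemma (\cref{lem:exchange++}) produces, for each $Q \subseteq T_b \setminus T_a$, \emph{some} spanning tree $T_Q$ with $T_Q \setminus T_a = Q$; the edges removed from $T_a$ are not fixed in advance and may vary with $Q$. This one-sided control is all one needs: by a one-line set-theoretic identity (\cref{lem:intermediate-spanning-trees}), $Q_1 \setminus Q_2 \subseteq T_{Q_1} \setminus T_{Q_2}$, so the map $Q \mapsto T_Q$ is $\tfrac12$-Lipschitz-expanding from below, giving $\ham(T_{Q_1}, T_{Q_2}) \ge \tfrac12\,|Q_1 \triangle Q_2|$. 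That factor of $2$ is the price of not fixing a bijection, but it is harmless asymptotically and the code step then goes through identically. So where your proof would give the cleaner identity $\ham(T_A, T_B) = |A \triangle B|$ \emph{if} Step~2 held, the paper settles for a factor-$2$ weaker inequality that is unconditionally true and essentially free to prove. Replacing your Step~2 with \cref{lem:exchange++} and \cref{lem:intermediate-spanning-trees} would close the gap; the contraction of $T_1 \cap T_2$ in your Step~1 is harmless but also unnecessary once you do so.
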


\begin{lemma}
	\label{lem:many-trees-clique}
	For any unweighted clique $G$ on $n > 2$ vertices, there exists a set $S \subseteq \trees(G)$ of size $2^{\Theta(n \log n)}$ such that for any distinct $T_1, T_2 \in S$, we have $\ham(T_1, T_2) = \Theta(n)$.
\end{lemma}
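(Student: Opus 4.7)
The plan is to apply a Gilbert--Varshamov-style greedy packing argument, paralleling the approach taken for \cref{lem:many-trees-general}, but exploiting the abundance of spanning trees in $K_n$: by Cayley's formula, $|\trees(K_n)| = n^{n-2} = 2^{\Theta(n\log n)}$. Starting from $S = \emptyset$, I would iteratively add to $S$ any spanning tree $T$ of $K_n$ whose Hamming distance to every current element of $S$ exceeds $n/2$, continuing until no such tree remains. At termination, the Hamming balls of radius $n/2$ around the elements of $S$ cover $\trees(K_n)$, so $|S| \ge |\trees(K_n)| / B(n/2)$, where $B(r) := \max_{T \in \trees(K_n)} |\{T' \in \trees(K_n) : \ham(T, T') \le r\}|$. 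Every pair of distinct trees in $S$ has Hamming distance $> n/2 = \Omega(n)$, and the matching upper bounds $\ham(T_1, T_2) \le n - 1 = O(n)$ and $|S| \le n^{n-2} = 2^{O(n \log n)}$ are immediate. It therefore suffices to show $B(n/2) \le n^{(1-\delta)n}$ for some constant $\delta > 0$, as this then yields $|S| \ge n^{(\delta - o(1))n} = 2^{\Omega(n \log n)}$.

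For the ball-size estimate, I would classify each $T'$ with $\ham(T, T') = j$ by the set $A = T \setminus T'$ of $j$ edges of $T$ absent from $T'$. Deleting $A$ from $T$ leaves a spanning forest $F$ with $j+1$ components of sizes $n_1, \ldots, n_{j+1}$, and $T'$ must contain $F$. By the generalized Cayley formula (equivalently, the matrix-tree theorem applied to the multigraph obtained from $K_n$ by contracting each component of $F$), the number of spanning trees of $K_n$ containing a fixed such $F$ is exactly $n^{j-1} \prod_i n_i$, which by AM--GM is at most $n^{j-1}\left(n/(j+1)\right)^{j+1}$. Summing over the $\binom{n-1}{j}$ choices of $A$ and over $j \le r$ yields
\[
    B(r) \;\le\; \sum_{j=0}^{r} \binom{n-1}{j}\, n^{j-1}\, \left(\frac{n}{j+1}\right)^{j+1}.
\]
A routine estimate using $\binom{n-1}{j} \le (en/j)^j$ shows that each term is at most $n^{j(1 + o(1))}$ uniformly in $j \le n/2$, so $B(n/2) \le n^{n/2 + o(n)}$, completing the argument.

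The main obstacle is keeping this ball-size estimate quantitatively crisp: one must verify that the factor $n^{n-2}$ from Cayley's formula genuinely outpaces the contributions from the binomial coefficient, from $\left(n/(j+1)\right)^{j+1}$, and from the $n^{j-1}$ in the forest-extension count, throughout the range $j \le n/2$. The calculation itself is routine once the generalized Cayley formula is in hand; the delicate point is confirming that the exponent of $n$ in each term stays strictly below $1$ after normalization by $n$, which reduces to a bound of the form $j(1+o(1))/n < 1 - \delta$ for $j \le n/2$. Any radius $r = cn$ with constant $c < 1$ would work equally well; I fix $c = 1/2$ only for concreteness.
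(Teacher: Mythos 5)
Your proposal is correct in outline and follows the same high-level strategy as the paper --- a greedy packing argument dividing $|\trees(K_n)| = n^{n-2}$ by an upper bound on the ball volume --- but the ball-volume bound itself is genuinely different. The paper sidesteps all combinatorial delicacy with a very crude count (\cref{lem:ball-bound}): a tree at Hamming distance $\le d$ is determined by a padded pair of lists $(L_+, L_-) \in (E\setminus T)^{\lfloor d\rfloor}\times T^{\lfloor d\rfloor}$, giving $m^d n^d \le n^{3d}$. That bound is weak, so the paper must instantiate the greedy packing with the small radius $d = (n-2)/6$, which still yields $|S| \ge n^{(n-2)/2}$ and $\ham > (n-2)/6 = \Theta(n)$. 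You instead prove a much sharper ball bound via the generalized Cayley formula and AM--GM, which permits the natural radius $n/2$ and in fact extracts a larger packing with larger minimum distance. Both routes land on the same $\Theta$-statement, so the extra precision is not needed here; the crude bound also has the advantage of applying to arbitrary graphs (it is reused in \cref{lem:numtrees-vs-diam}), whereas your count is clique-specific. One local imprecision: the claim that $\binom{n-1}{j}\, n^{j-1}\, (n/(j+1))^{j+1} \le n^{j(1+o(1))}$ \emph{uniformly} in $j \le n/2$ is false for small $j$ (e.g.\ $j=1$ gives roughly $n^3$, not $n^{1+o(1)}$); the $j^j(j+1)^{j+1}$ denominator only absorbs a power of $n$ per factor when $j = \Theta(n)$. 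The conclusion $B(n/2)\le n^{n/2+o(n)}$ survives because the sum is dominated by the terms near $j=n/2$, where your estimate is valid, and the $O(n)$ smaller terms contribute negligibly; but the justification should be phrased as a bound on the maximum term (near $j=n/2$) rather than a uniform per-$j$ bound.
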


Now we are ready to prove \cref{thm:lb-main}.

\begin{proof}[Proof of \cref{thm:lb-main}]
	\cref{lem:many-trees-imply-lb} says that for any mechanism $\algo$, there exist weights $\vew$ such that,
	if $\algo$ is $\eps$-differentially private with respect to $\simone$, then:
	\begin{align*}
		\Pr_{T \sim \mathcal A(G, \vew)}\Bigg[\,\vew(T) \le \vew(T^*) + \underbrace{\frac dD\left(\frac{\log|S|}{8\varepsilon} - \frac14\right)}_{x}\,\Bigg]
		\le |S|^{-1/2}
	\end{align*}
	for $D = \diam(G)$ and $d$ such that $\ham(T_1, T_2) < d$ for any distinct $T_1, T_2 \in S$.
	This means that $\Pr[\vew(T) > \vew(T^*) + x] \ge 1 - |S|^{-1/2} = \Omega(1)$, and thus, as $d / D \le 1$:
	\[
		\E[\vew(T)] \ge \vew(T^*) + \Omega(x) = \vew(T^*) + \Omega\left(\frac{d\log|S|}{D\eps}\right) - \O(1).
	\]
	Now, if $S$ is from \cref{lem:many-trees-general}, we have $\log |S| = \Theta(D)$ and $d =
	\Theta(D)$. If $G = K_n$ then instead we take $S$ from
	\cref{lem:many-trees-clique}, and $\log |S| = \Theta(n \log n)$ and $d = \Theta(D)$. Substituting twice in the equation above proves the first half of the theorem.

	If, instead, $\algo$ is $\eps$-differentially private with respect to $\siminf$, there likewise exist weights $\vew$ such that:
	\begin{align*}
		\Pr_{T \sim \mathcal A(G, \vew)}\Bigg[\,\vew(T) \le \vew(T^*) + d\left(\frac{\log|S|}{4\varepsilon} - \frac{1}{2}\right)\,\Bigg]
		\le |S|^{-1/2},
	\end{align*}
	and by a similar argument,
	\[
		\E[\vew(T)] \ge \vew(T^*) + \Omega\left(\frac{d\log|S|}{\eps}\right) - \O(d).
	\]
	Now, we can again substitute for $\log |S|$ and $d$ according to either \cref{lem:many-trees-general} (if $G$ is a general graph), or \cref{lem:many-trees-clique} (if $G$ is a clique). This proves the second half of the theorem.
\end{proof}

\section{Finding a Large Set of Dissimilar Trees}
\label{sec:spanning-trees}

In this section, our goal is to prove \cref{lem:many-trees-general,lem:many-trees-clique}, which assert the existence of a large set $S$ of spanning trees such that every two spanning trees differ in many edges. In~\cref{sec:spanning-trees-basic}, we state some common properties of spanning trees and prove that there are at least $2^D$ spanning trees (for $D = \diam(G)$). In~\cref{sec:binary-codes}, we show how to embed binary block codes into the $2^D$ spanning trees from \cref{sec:spanning-trees-basic}, which leads to our first method of constructing $S$, and proving \cref{lem:many-trees-general}; this is the lemma that we need in order to prove our universal optimality results.

In~\cref{sec:greedy-packing}, we bound the number of trees in the $d$-ball around some tree $T$, and use this in conjunction with a greedy packing argument to provide a different method of constructing $S$ that gives slightly different guarantees than the one in \Cref{sec:binary-codes}. We use this to prove \cref{lem:many-trees-clique}; this is the lemma that we need to prove our worst-case optimality results.

\subsection{Properties of Spanning Trees}
\label{sec:spanning-trees-basic}

Here we state some simple properties of spanning trees. Although (some of) these results could be considered folklore, we give proofs for completeness.

\begin{lemma}[Exchange lemma]
	\label{lem:exchange}
	Let $T_x, T_y \in \trees(G)$ be two spanning trees and let $e \in T_y
	\setminus T_x$. Then there exists $f \in T_x \setminus T_y$ such that $T' =
	T_x \cup \{e\}\setminus \{f\}$ is also a spanning tree. Furthermore, $|T_y \setminus T'| = |T_y \setminus T_x| - 1$.
\end{lemma}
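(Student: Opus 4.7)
The plan is to use the standard cycle-exchange argument for spanning trees. Let $e = (u,v) \in T_y \setminus T_x$. Since $T_x$ is a spanning tree, adding $e$ to $T_x$ creates a unique cycle $C \subseteq T_x \cup \{e\}$ (the $u$--$v$ path in $T_x$ together with $e$). My choice of $f$ will be an edge of $C \setminus \{e\}$ that belongs to $T_x \setminus T_y$; the first task is to show such an $f$ exists.

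To see existence, suppose for contradiction that every edge of $C \setminus \{e\}$ lies in $T_y$. Since $e \in T_y$ as well, the entire cycle $C$ would be contained in $T_y$, contradicting the fact that $T_y$ is a tree and hence acyclic. Therefore, at least one edge of $C \setminus \{e\}$ lies in $T_x \setminus T_y$; pick any such edge as $f$. Then $T' \coloneqq T_x \cup \{e\} \setminus \{f\}$ has exactly $n - 1$ edges, and it is connected because removing a single edge of a cycle from a connected graph keeps it connected. Thus $T'$ is a spanning tree.

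For the second claim, I want to show $T_y \setminus T' = (T_y \setminus T_x) \setminus \{e\}$, which immediately gives $|T_y \setminus T'| = |T_y \setminus T_x| - 1$. I will do this by a direct case analysis on an arbitrary edge $h \in T_y$: if $h \in T_x$, then since $f \notin T_y$ we have $h \ne f$, so $h \in T'$; if $h \notin T_x$, then $h \in T'$ iff $h = e$. Putting these two cases together yields exactly $(T_y \setminus T_x) \setminus \{e\}$, which has cardinality $|T_y \setminus T_x| - 1$ because $e \in T_y \setminus T_x$.

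The main (very mild) obstacle is just making sure one carefully tracks which of $e, f$ lies in which of $T_x, T_y, T'$ when computing the set difference; everything else is a standard cycle/exchange argument that should fit in a few lines.
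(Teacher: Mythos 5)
Your proof is correct and follows essentially the same cycle-exchange argument as the paper: both find $f$ on the unique cycle in $T_x \cup \{e\}$ that avoids $T_y$, and both then track how adding $e$ and removing $f$ affect $|T_y \setminus \cdot|$. Your case analysis for the cardinality claim is slightly more explicit than the paper's one-line justification, but the underlying reasoning is identical.
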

\begin{proof}
	The graph $T_x \cup \{e\}$ has
	exactly one cycle $C$. $T_y$ does not have cycles, and
	$C$ must thus contain an edge $f \notin T_y$; furthermore, $f \ne e$, as $e
	\in T_y$. But then $f \in C \setminus \{e\} \subseteq T_x$ and thus $f \in
	T_x \setminus T_y$ as needed, and $T' = T_x \cup \{e\} \setminus \{f\}$ is
	again a tree. Finally, $|T_y \setminus T'| = |T_y \setminus T_x| - 1$, as
	removing $f$ had no effect on $|T_y \setminus T'|$ and adding $e$ decreased
	it by $1$.
\end{proof}

\begin{lemma}[Iterated exchange lemma]
	\label{lem:exchange++}
	Given two spanning trees $T_a, T_b \in \trees(G)$, and a set of edges $Q
	\subseteq T_b \setminus T_a$, there exists a spanning tree $T_Q$ such that
	$T_Q \setminus T_a = Q$ and $|T_b \setminus T_Q| = |T_b \setminus T_a| - |Q|$.
\end{lemma}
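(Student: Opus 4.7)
The plan is to prove \cref{lem:exchange++} by induction on $|Q|$, using the basic exchange lemma (\cref{lem:exchange}) as the inductive engine. The base case $|Q| = 0$ is immediate: take $T_Q = T_a$, which satisfies both conditions trivially.

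For the inductive step, fix $|Q| = k \ge 1$. I would pick an arbitrary $e \in Q$, set $Q' = Q \setminus \{e\}$, and invoke the inductive hypothesis to obtain a spanning tree $T_{Q'}$ with $T_{Q'} \setminus T_a = Q'$ and $|T_b \setminus T_{Q'}| = |T_b \setminus T_a| - (k-1)$. Then I would apply \cref{lem:exchange} with $T_x = T_{Q'}$ and $T_y = T_b$, using the edge $e \in T_b \setminus T_{Q'}$ (noting $e \notin T_{Q'}$ since $e \in Q \subseteq T_b \setminus T_a$ but $T_{Q'} \setminus T_a = Q'$ does not contain $e$, and $e \notin T_a$). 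This yields an edge $f \in T_{Q'} \setminus T_b$ such that $T_Q \coloneqq T_{Q'} \cup \{e\} \setminus \{f\}$ is a spanning tree and $|T_b \setminus T_Q| = |T_b \setminus T_{Q'}| - 1 = |T_b \setminus T_a| - k$, as desired.

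The one nontrivial verification is that $T_Q \setminus T_a = Q$. For this I must show that $f \in T_a$ (so that removing $f$ does not affect $T_Q \setminus T_a$). This follows from the observation that $f \in T_{Q'} \setminus T_b$, combined with the fact that $T_{Q'} \setminus T_a = Q' \subseteq T_b$: if $f$ were not in $T_a$, then $f \in T_{Q'} \setminus T_a = Q' \subseteq T_b$, contradicting $f \notin T_b$. Hence $f \in T_{Q'} \cap T_a$, so $(T_{Q'} \setminus T_a) \setminus \{f\} = T_{Q'} \setminus T_a = Q'$, and since $e \notin T_a$, we conclude $T_Q \setminus T_a = Q' \cup \{e\} = Q$.

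The main (and only) obstacle is this bookkeeping check that the edge $f$ removed by the exchange lemma lies in $T_a$ rather than in $T_b \setminus T_a$; I do not anticipate any other subtlety, since both properties we need to maintain (the identity of $T_Q \setminus T_a$ and the decrement of $|T_b \setminus T_Q|$) follow mechanically from the basic exchange lemma once this observation is in place.
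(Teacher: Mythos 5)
Your proof is correct and follows the same inductive strategy as the paper's: base case $|Q|=0$ gives $T_Q=T_a$, and the inductive step peels off an edge $e \in Q$ and applies \cref{lem:exchange} with $T_x = T_{Q'}$, $T_y = T_b$. The only place you diverge is the final verification that $T_Q \setminus T_a = Q$. You argue directly that the removed edge $f$ must lie in $T_a$ (since $f \in T_{Q'} \setminus T_b$ and $T_{Q'} \setminus T_a = Q' \subseteq T_b$, so $f \notin T_a$ would force $f \in T_b$), from which $T_Q \setminus T_a = Q' \cup \{e\} = Q$ follows by set algebra. The paper instead shows the inclusion $T_Q \setminus T_a \subseteq Q$ and then uses the triangle inequality for $\ham$ to pin down $|T_Q \setminus T_a| \ge |Q|$, forcing equality. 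Your route is slightly more direct in that it tracks exactly where $f$ lives rather than invoking a cardinality argument; the paper's route avoids the set-manipulation bookkeeping at the cost of a one-line metric argument. Both are clean and complete, and the difference is purely a matter of taste in how one closes the induction.
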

\begin{proof}
	Let $k = |T_b \setminus T_a|$. The claim follows by induction on $|Q|$. If $|Q| =
	0$, then we can set $T_Q \coloneqq T_a$. Otherwise, take any
	$e \in Q$, and let $Q' = Q \setminus \{e\}$. By induction,
	let $T'$ be the tree such that $T' \setminus T_a = Q'$ and $|T_b \setminus T'| = |T_b \setminus T_a| - |Q'| = k - |Q| + 1$. Now 
	invoke \cref{lem:exchange} with $T_x = T'$, $T_y = T_b$ and $e$ to obtain $f \in T' \setminus T_b$
	and a tree $T = T' \cup \{e\} \setminus \{f\}$ satisfying $|T_b \setminus T| = |T_b \setminus T'| - 1
	= k - |Q|$.
	
	We claim that $T \setminus T_a \subseteq Q$, since $T' \setminus T_a = Q'
	\subseteq Q$ by induction, and the only edge $T$ additionally contains is
	$e \in Q$. But simultaneously, by triangle inequality,
	$
		\ham(T_a, T) \ge \ham(T_a, T_b) - \ham(T, T_b) = k - (k - |Q|) = |Q|.
	$ 
	Hence, $T \setminus T_a \subseteq Q$ and $|T \setminus T_a| = |Q|$, and thus $T \setminus T_a = Q$ as needed and we can set $T_Q \coloneqq T$.
\end{proof}

\begin{corollary}
	\label{lem:exp-many-trees}
	For every graph $G$, it holds that $|\trees(G)| \ge 2^D$ for $D = \diam(G)$.
\end{corollary}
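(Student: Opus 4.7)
My plan is to exploit the Iterated Exchange Lemma (\cref{lem:exchange++}) to exhibit $2^D$ distinct spanning trees, each parameterized by a subset of the symmetric difference of two diametrically opposite trees.

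First, I would unwrap the definition of $D = \diam(G)$: there must exist two spanning trees $T_a, T_b \in \trees(G)$ with $\ham(T_a, T_b) = D$, i.e., $|T_b \setminus T_a| = D$. This is where all the ``raw material'' for the construction comes from.

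Next, I would enumerate all $2^D$ subsets $Q \subseteq T_b \setminus T_a$. For each such $Q$, I would invoke \cref{lem:exchange++} with $T_a$, $T_b$, and $Q$ to obtain a spanning tree $T_Q$ satisfying $T_Q \setminus T_a = Q$. The key observation is then that the map $Q \mapsto T_Q$ is injective: if $Q_1 \ne Q_2$, then $T_{Q_1} \setminus T_a = Q_1 \ne Q_2 = T_{Q_2} \setminus T_a$, forcing $T_{Q_1} \ne T_{Q_2}$. Hence $|\trees(G)| \ge 2^{|T_b \setminus T_a|} = 2^D$.

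There is no real obstacle here; the proof is essentially a corollary that just ``wires up'' the preceding lemma. The only thing one has to be slightly careful about is the injectivity argument, but it is immediate from the fact that $\cref{lem:exchange++}$ returns a tree whose symmetric difference with $T_a$ is exactly the prescribed $Q$, rather than merely a superset.
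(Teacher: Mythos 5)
Your proof is correct and is essentially identical to the paper's: both fix $T_a, T_b$ at Hamming distance $D$, invoke \cref{lem:exchange++} on each subset $Q \subseteq T_b \setminus T_a$, and use the fact that $T_Q \setminus T_a = Q$ to get injectivity. Your write-up simply makes the injectivity step slightly more explicit than the paper, which dismisses it as ``each of them yields a unique spanning tree.''
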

\begin{proof}
	Fix any two $T_a, T_b \in \trees(G)$ such that $|T_b \setminus T_a| = D$. By \cref{lem:exchange++}, we have that for any
	$Q \subseteq T_b \setminus T_a$, there exists a spanning tree $T_Q$ such that $T_Q
	\setminus T_a = Q$. As there are $2^D$ possible choices of $Q$, and each of them yields a unique spanning tree, we can conclude that there are at least $2^D$ different spanning trees.
\end{proof}

\subsection{Dissimilar Trees via Binary Codes}
\label{sec:binary-codes}

In this section, we show that the set $\mathcal Z$ of $2^D$ trees from \cref{lem:exp-many-trees}, behaves, in some sense, as the space $\{0, 1\}^D$. Namely, there is a correspondence between $\mathcal Z$ and $\{0, 1\}^D$, such that if $\vex, \vey \in \{0, 1\}^D$ differ in $k$ positions, then their corresponding spanning trees have Hamming distance at least $k / 2$.
In this way, we reduce the problem of finding a set $S \subseteq \mathcal Z$ of dissimilar spanning trees to the problem of finding a good binary block code.

\rh{statement jde zjednodušit = zobecnit, ale chceme to?}
\begin{lemma}
	\label{lem:intermediate-spanning-trees}
	Given two spanning trees $T_a$, $T_b$, let $Q_1, Q_2 \subseteq T_b \setminus T_a$. Let $T_{Q_1}$ and $T_{Q_2}$ be trees such that $T_{Q_1} \setminus T_b = Q_1$ and $T_{Q_2} \setminus T_b = Q_2$. Then it holds 
that $Q_1 \setminus Q_2 \subseteq T_{Q_1} \setminus T_{Q_2}$.
\end{lemma}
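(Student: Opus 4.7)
The plan is to interpret the lemma's hypotheses via \cref{lem:exchange++}: here $T_{Q_1}$ and $T_{Q_2}$ should be the spanning trees obtained by iterated exchange starting from $T_a$ with exchange sets $Q_1, Q_2 \subseteq T_b \setminus T_a$, so that $T_{Q_i} \setminus T_a = Q_i$. (The statement as printed writes $T_b$ in place of $T_a$, which I will treat as a typo: since $Q_i \subseteq T_b$, the literal identity $T_{Q_i} \setminus T_b = Q_i$ would force $Q_i = \emptyset$.) With that reading, the conclusion is exactly the kind of monotone dependence of $T_Q$ on $Q$ that one needs to subsequently embed a binary code into spanning trees in \cref{sec:binary-codes}.

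With the defining identity $T_{Q_i} \setminus T_a = Q_i$ in hand, the proof is a short element chase. I would fix an arbitrary $e \in Q_1 \setminus Q_2$ and verify both $e \in T_{Q_1}$ and $e \notin T_{Q_2}$. The first is immediate, since $e \in Q_1 = T_{Q_1} \setminus T_a \subseteq T_{Q_1}$.

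For the second, I would argue by contradiction: suppose $e \in T_{Q_2}$. From $e \in Q_1 \subseteq T_b \setminus T_a$ we get $e \notin T_a$, hence $e \in T_{Q_2} \setminus T_a = Q_2$, contradicting $e \in Q_1 \setminus Q_2$. Since $e \in Q_1 \setminus Q_2$ was arbitrary, this gives $Q_1 \setminus Q_2 \subseteq T_{Q_1} \setminus T_{Q_2}$ as required.

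There is no real obstacle here beyond noticing the right way to combine the hypotheses. The key observation is that the containment $Q_i \subseteq T_b \setminus T_a$ forces $Q_i \cap T_a = \emptyset$, which allows membership in $Q_i$ to be read off from membership in $T_{Q_i}$ whenever the edge under consideration is already known to be absent from $T_a$. This conversion is the only non-formal step, and once it is made, both halves of the desired inclusion fall out directly from the defining property of $T_{Q_i}$.
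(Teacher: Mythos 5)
Your proof is correct, and you are also right that the statement contains a typo: as written, $T_{Q_i} \setminus T_b = Q_i$ together with $Q_i \subseteq T_b$ forces $Q_i = \emptyset$, rendering the lemma vacuous. The intended hypothesis, matching \cref{lem:exchange++} and the invocation in \cref{lem:code-to-trees}, is $T_{Q_i} \setminus T_a = Q_i$.

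The paper's own proof is a one-line set-algebra computation:
\[
  Q_1 \setminus Q_2 = (T_{Q_1} \setminus T_a) \setminus (T_{Q_2} \setminus T_a) = (T_{Q_1} \setminus T_{Q_2}) \setminus T_a \subseteq T_{Q_1} \setminus T_{Q_2},
\]
using the general identity $(A \setminus C) \setminus (B \setminus C) = (A \setminus B) \setminus C$. (Notably, this identity is agnostic to the typo and is applied in the paper verbatim with $T_b$ in place of $T_a$.) Your element chase is essentially the pointwise unrolling of this same identity: the case split on whether $e$ is in $T_a$ is exactly what makes the identity go through, and the ``conversion'' step you flag --- reading off membership in $Q_i$ from membership in $T_{Q_i}$ once $e \notin T_a$ is known --- is the pointwise content of factoring out $\setminus T_a$. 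So the two proofs are the same argument at different levels of abstraction: the paper's is more compact, yours is more explicit about where $Q_i \cap T_a = \emptyset$ is used. Your version has the additional virtue of forcing the typo into the open, since the element chase visibly fails to make sense with $T_b$ in the hypothesis.
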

We note that $T_{Q_1}$ and $T_{Q_2}$ always exist thanks to \cref{lem:exchange++}.
\begin{proof}
	By rules for set subtraction, we have
	$
	Q_1 \setminus Q_2 = (T_{Q_1} \setminus T_b) \setminus (T_{Q_2} \setminus T_b) = (T_{Q_1} \setminus T_{Q_2}) \setminus T_b \subseteq T_{Q_1} \setminus T_{Q_2}.
	$
\end{proof}

We briefly recall the definition of a block code.
Note that we use the less common definition of an $(n, M, d)_2$ code where $M$ is not the message length, but the (exponentially larger) number of codewords.

\begin{definition}
	A \emph{$(n, M, d)_2$ code} is a set $\C \subseteq \{0, 1\}^n$ such that $|\C| \ge M$ and every two different vectors $\vex, \vey \in \C$ differ in at least $d$ positions.
\end{definition}

Next, we prove the reduction between finding a set of dissimilar spanning trees and finding a good block code:

\begin{lemma}
	\label{lem:code-to-trees}
	Let $\C$ be a $(D, M, d + 1)_2$ code and let $G$ be an unweighted graph such that $D = \diam(G)$. Then there exists a set $S \subseteq \trees(G)$ such that $|S| = M$ and $\ham(T_1, T_2) > d / 2$ for any two different $T_1, T_2 \in S$.
\end{lemma}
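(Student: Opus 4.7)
The plan is to exploit the $2^D$ spanning trees produced by \cref{lem:exp-many-trees} and \cref{lem:exchange++} as a ``coordinate system'' indexed by subsets of $T_b \setminus T_a$, and then pull back the code $\mathcal C$ through this indexing. Concretely, fix $T_a, T_b \in \trees(G)$ with $|T_b \setminus T_a| = D$, enumerate the edges of $T_b \setminus T_a$ as $e_1, \dots, e_D$, and for each $\vex \in \mathcal C \subseteq \{0,1\}^D$ define $Q_\vex \coloneqq \{\,e_i : x_i = 1\,\} \subseteq T_b \setminus T_a$. By \cref{lem:exchange++}, there is a spanning tree $T_{Q_\vex}$ satisfying $T_{Q_\vex} \setminus T_a = Q_\vex$, and I would define $S \coloneqq \{\,T_{Q_\vex} : \vex \in \mathcal C\,\}$.

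For the size, note that if $\vex \ne \vey$ then $Q_\vex \ne Q_\vey$, so $T_{Q_\vex} \setminus T_a \ne T_{Q_\vey} \setminus T_a$ and hence $T_{Q_\vex} \ne T_{Q_\vey}$. Thus $|S| = |\mathcal C| \ge M$ (and we may prune to get exactly $M$).

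For the distance, I would apply \cref{lem:intermediate-spanning-trees} twice, once for each ordering of $\vex, \vey$, yielding $Q_\vex \setminus Q_\vey \subseteq T_{Q_\vex} \setminus T_{Q_\vey}$ and $Q_\vey \setminus Q_\vex \subseteq T_{Q_\vey} \setminus T_{Q_\vex}$. Taking cardinalities and adding:
\[
    |Q_\vex \triangle Q_\vey| \;\le\; |T_{Q_\vex} \setminus T_{Q_\vey}| + |T_{Q_\vey} \setminus T_{Q_\vex}| \;=\; 2\,\ham(T_{Q_\vex}, T_{Q_\vey}),
\]
using the symmetry $|T_1 \setminus T_2| = |T_2 \setminus T_1|$ from \cref{sec:preliminaries-spanning-trees}. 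Since $|Q_\vex \triangle Q_\vey|$ is precisely the Hamming distance between $\vex$ and $\vey$ as binary strings, which is at least $d+1$ by the code property, we conclude $\ham(T_{Q_\vex}, T_{Q_\vey}) \ge (d+1)/2 > d/2$, as required.

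There is no real obstacle here beyond making sure the coordinate encoding is faithful: the nontrivial structural content is already packaged in \cref{lem:exchange++} (existence of $T_Q$) and \cref{lem:intermediate-spanning-trees} (the one-sided inclusion between the symmetric differences in ``$Q$-space'' and ``tree space''). Everything else is bookkeeping, and the only subtle point is remembering to use both directions of \cref{lem:intermediate-spanning-trees} together with the equality of the two set differences for a pair of spanning trees, which is what turns the one-sided inclusion into the factor-$2$ loss in the distance bound.
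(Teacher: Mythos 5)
Your proof is correct and follows essentially the same route as the paper's: same construction (fix $T_a,T_b$ at distance $D$, encode $\vex\in\mathcal C$ as $Q_\vex\subseteq T_b\setminus T_a$, take $T_{Q_\vex}$ from \cref{lem:exchange++}), and the same distance argument via \cref{lem:intermediate-spanning-trees} applied in both directions. Your treatment is marginally more explicit (you spell out injectivity for the size claim and note that the code-size convention is $|\mathcal C|\ge M$), but the substance is identical.
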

\begin{proof}
	We will construct $S$ as follows: first, we fix $T_a$ and $T_b$ such that $\ham(T_a, T_b) = D$, as in \cref{lem:exp-many-trees}. We number the edges of $T_b \setminus T_a$ in any order as $e_1, \ldots, e_D$. Now, for every $\vex \in \C$, define $Q_\vex\coloneq \{\, e_i \mid i \in \{1, \ldots, D\} \land \vex_i = 1 \,\}$, and let $T_{Q_\vex}$ be the tree obtained by invoking \cref{lem:exchange++}. Namely, $T_{Q_\vex}$ satisfies $T_{Q_\vex} \setminus T_a = Q_\vex$. Finally, we set $S = \{\, T_{Q_\vex} \mid \vex \in \C \,\}$.

	Clearly $|S| = |\C| = M$, and we need to show that $\ham(T_{Q_\vex}, T_{Q_\vey}) > d / 2$ for all distinct $\vex, \vey \in \C$. Note that the number of positions in which $\vex$ and $\vey$ differ is exactly $|Q_\vex \setminus Q_\vey| + |Q_\vey \setminus Q_\vex|$. Now we can use \cref{lem:intermediate-spanning-trees}, to write
	\[
		d + 1
		\le
		|Q_\vex \setminus Q_\vey| + |Q_\vey \setminus Q_\vex|
		\le
		|T_{Q_\vex} \setminus T_{Q_\vey}| + |T_{Q_\vey} \setminus T_{Q_\vex}|
		=
		2 \ham(T_{Q_\vex}, T_{Q_\vey}),
	\]
	and thus $\ham(T_{Q_\vex}, T_{Q_\vey}) > d / 2$, as needed.
\end{proof}

Finally, we show that there always exists a good block code:

\begin{lemma}
	\label{lem:good-block-code}
	For every $n$, there exists an $(n, 2^{\lfloor n/3\rfloor}, \lfloor n/6\rfloor + 1)_2$ code.
\end{lemma}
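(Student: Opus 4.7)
The plan is to apply the classical Gilbert--Varshamov greedy construction and verify that the numerical constants in the statement are met. Concretely, I would start with $\C = \emptyset$ and iteratively add to $\C$ any vector $\vex \in \{0, 1\}^n$ whose Hamming distance from every existing codeword is at least $d \coloneqq \lfloor n/6 \rfloor + 1$. The resulting $\C$ has minimum distance $\geq d$ by construction, so it only remains to lower-bound $|\C|$.

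The key observation is that this greedy process can only halt once the Hamming balls of radius $d-1$ centered at the chosen codewords cover all of $\{0,1\}^n$; equivalently, one always has $|\C| \cdot B(n, d-1) \geq 2^n$, where $B(n, r) \coloneqq \sum_{i=0}^r \binom{n}{i}$ denotes the size of a Hamming ball of radius $r$ in $\{0,1\}^n$. Hence it suffices to prove $B(n, \lfloor n/6 \rfloor) \leq 2^{n - \lfloor n/3 \rfloor}$, which combined with the previous inequality yields $|\C| \geq 2^{\lfloor n/3 \rfloor}$ as required.

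To bound the ball volume I would invoke the standard entropy estimate $B(n, k) \leq 2^{n H(k/n)}$ valid for $k \leq n/2$, where $H$ is the binary entropy function. Since $H$ is monotone on $[0, 1/2]$ and $\lfloor n/6 \rfloor/n \leq 1/6$, this gives $B(n, \lfloor n/6 \rfloor) \leq 2^{n H(1/6)}$, and a one-line numerical check yields $H(1/6) < 0.651 < 2/3$; combined with $n - \lfloor n/3 \rfloor \geq 2n/3$, we get $B(n, \lfloor n/6 \rfloor) < 2^{2n/3} \leq 2^{n - \lfloor n/3 \rfloor}$, as desired. There is no real conceptual obstacle: the argument is a textbook instance of the Gilbert--Varshamov bound, and the only care needed is to verify that the slack of roughly $0.016$ in the exponent $2/3 - H(1/6)$ suffices to absorb the rounding effects of the floor functions, which it comfortably does.
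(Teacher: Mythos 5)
Your proof is correct and uses the same core idea as the paper: the Gilbert--Varshamov bound combined with the entropy estimate $\sum_{i \le k}\binom{n}{i} \le 2^{nH(k/n)}$ and the numerical fact $H(1/6) < 2/3$. The one small presentational difference is that you handle the floor functions directly (via $H(\lfloor n/6\rfloor/n) \le H(1/6)$ and $n - \lfloor n/3\rfloor \ge 2n/3$) for all $n$ at once, whereas the paper first assumes $6 \mid n$ and then reduces the general case to it by padding the codewords with zeros; your treatment is slightly cleaner but the argument is otherwise identical.
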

\begin{proof}
	First assume that $n$ is divisible by $6$.
	By the Gilbert–Varshamov bound \cite{gilbert-varshamov1,gilbert-varshamov2}, there exists a $(n, K, n/6 + 1)_2$ code for some $K$ that satisfies
	\[
		K \ge \frac{2^n}{\sum_{i=0}^{n/6}\binom{n}{i}}.
	\]

	Using the bound on the sum of binomial coefficients (see e.g.~\citet[p.~427]{flum2006parameterized}), we obtain $\sum_{i=0}^{n/6} \le 2^{n \cdot H(1/6)}$ with $H(p)$ being the binary entropy function $H(p) = -p\log p - (1-p) \log (1-p)$. One can verify that $H(1/6) \le 2/3$ and thus $K \ge 2^n / 2^{n \cdot 2/3} = 2^{n/3}$
	as needed.

	Finally, if $n$ is not divisible by $6$, we can apply the lemma with $n' = 6\lfloor n/6\rfloor$ and then pad every codeword of the resulting code with $n - n'$ zeros.
\end{proof}

Combination of these results proves \cref{lem:many-trees-general}:

\begin{proof}[Proof of \cref{lem:many-trees-general}]
	Combining \cref{lem:code-to-trees,lem:good-block-code} and setting $n = D$ in the latter immediately yields $S \subseteq \trees(G)$ with $|S| = 2^{\Theta(D)}$ and $d = \Theta(D)$.
\end{proof}

\subsection{Dissimilar Trees via Greedy Packing}
\label{sec:greedy-packing}

In this section, we provide a different approach for finding a large set $S$ of
dissimilar spanning trees. It works by producing a crude upper bound $U$ on the number of trees in the $d$-ball around a tree $T$, and then using a greedy packing argument to show that we can always find $S$ of size $|S| \ge |\trees(G)| / U$.

\begin{lemma}[Volume of a $d$-ball around a spanning tree]
	\label{lem:ball-bound}
	For a graph $G$, $T \in \trees(G)$, and $d \in \R_{>0}$, it holds that
	\[
		|\{\,T' \in \trees(G) \mid \ham(T, T') \le d\,\}| \le m^d n^d.
	\]
\end{lemma}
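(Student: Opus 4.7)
The plan is to bound the size of the $d$-ball around $T$ by a surjective encoding of each tree $T'$ in it as a length-$d$ sequence of edge swaps starting from $T$, and then counting these sequences by a simple product bound.

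Given any $T' \in \trees(G)$ with $k := \ham(T,T') \le d$, I would apply the iterated exchange lemma (\cref{lem:exchange++}) to obtain $T'$ from $T$ via $k$ swaps $(e_1,f_1),\dots,(e_k,f_k)$, where at step $i$ the current tree $T_{i-1}$ becomes $T_i = T_{i-1}\cup\{e_i\}\setminus\{f_i\}$ with $e_i \in T'\setminus T_{i-1}$ and $f_i \in T_{i-1}\setminus T'$. To normalize the length to exactly $d$, I would pad with $d-k$ trivial "no-op" swaps of the form $(e,e)$ for an arbitrary $e\in T'$, which leave the current tree unchanged. This gives a (not necessarily injective) map from $\{T' : \ham(T,T')\le d\}$ into the set of length-$d$ sequences whose entries satisfy $e_i\in E$ and $f_i\in T_{i-1}$.

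For the count, at step $i$ there are at most $m$ choices for $e_i$ (any edge in $E$) and at most $n-1$ choices for $f_i$ (any edge of the current spanning tree $T_{i-1}$, which has exactly $n-1$ edges). Multiplying over the $d$ steps yields at most $(m(n-1))^d \le m^d n^d$ such sequences. Surjectivity of the encoding then gives $|\{T' : \ham(T,T')\le d\}| \le m^d n^d$.

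I do not expect a real obstacle here: the iterated exchange lemma is already established, the padding step is routine (and valid because $e_i = f_i$ really leaves the tree unchanged), and the counting is a plain product. The only subtle point is making sure the surjectivity claim covers all distances $k \le d$ and not just $k = d$, which is precisely why the padding is needed.
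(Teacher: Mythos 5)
Your proof is correct, but it takes a heavier route than the paper. The paper's proof encodes each $T'$ in the ball \emph{statically}: it just lists the $d' = \ham(T,T')$ edges of $T'\setminus T$ and the $d'$ edges of $T\setminus T'$, padded to lists of length $\lfloor d\rfloor$, and observes that the pair of lists determines $T'$ as $(T\setminus L_-)\cup L_+$. This needs no structural result about spanning trees at all; it is a pure set-theoretic observation plus a count of $(m-n+1)^{\lfloor d\rfloor}(n-1)^{\lfloor d\rfloor}\le m^d n^d$ pairs. You instead encode $T'$ \emph{dynamically}, as a sequence of valid edge swaps produced by repeated application of the exchange lemma, and then count sequences by a product bound of $(m(n-1))^{\lfloor d\rfloor}$. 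Both arguments give the same bound, and your injectivity argument (the swap sequence is decodable back to $T'$) is sound, but the extra appeal to the exchange machinery buys you nothing here. Two small points to tighten: since $d\in\R_{>0}$ need not be an integer, you should pad to length $\lfloor d\rfloor$ rather than $d$, as the paper does; and the construction you describe (one swap per step, each reducing $|T'\setminus T_{i-1}|$ by one) is really $k$ applications of the basic exchange lemma (\cref{lem:exchange}), not of the iterated version (\cref{lem:exchange++}), which produces a single intermediate tree rather than a swap sequence.
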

\begin{proof}
	Any $T'$ with $\ham(T, T') = d' \le \lfloor d\rfloor$ can be fully (and possibly
	non-uniquely) described by a list $L_+$ of $d'$ edges $e \in E \setminus T$ to be
	added to $T$ and another list $L_-$ of $d'$ edges $e \in T$ to be removed.
	Furthermore, both lists can be padded to have length exactly $\lfloor d\rfloor$ by
	repeating arbitrary entries. As there are at most $(m-n+1)^{\lfloor d\rfloor} (n-1)^{\lfloor d \rfloor} \le m^{\lfloor d \rfloor} n^{\lfloor d \rfloor}$ possible pairs $(L_+, L_-) \in (E \setminus T)^{\lfloor d \rfloor} \times T^{\lfloor d \rfloor}$ of lists of length ${\lfloor d \rfloor}$, there must be at most $m^{\lfloor d \rfloor} n^{\lfloor d \rfloor} \le m^dn^d$ possible trees $T'$.
\end{proof}

\begin{lemma}[Greedy packing]
	\label{lem:tree-set-by-ball-cutting}
	For a graph $G$, and a parameter $d > 0$, there exists a set $S \subseteq \trees(G)$ such that $\ham(T_1, T_2) > d$ for all distinct $T_1, T_2 \in S$ and furthermore,
	\[
		|S| \ge \frac{|\trees(G)|}{m^dn^d}.
	\]
\end{lemma}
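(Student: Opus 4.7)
The plan is to prove this by a straightforward greedy packing argument, exploiting the ball-volume bound from \cref{lem:ball-bound}. The idea is to build $S$ incrementally: starting from $R_0 \coloneqq \trees(G)$ and $S_0 \coloneqq \emptyset$, at step $i+1$ we pick an arbitrary tree $T_{i+1} \in R_i$, add it to $S_{i}$ to form $S_{i+1}$, and then set $R_{i+1} \coloneqq R_i \setminus B(T_{i+1}, d)$, where $B(T, d) = \{\,T' \in \trees(G) \mid \ham(T, T') \le d\,\}$. We terminate once $R_i = \emptyset$ and return $S \coloneqq S_i$.

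First I would argue separation: for any two distinct $T_j, T_k \in S$ with $j < k$, the tree $T_k$ was chosen from $R_{k-1} \subseteq R_j \setminus B(T_j, d)$, so $T_k \notin B(T_j, d)$, which gives $\ham(T_j, T_k) > d$, as desired.

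Next I would bound $|S|$ from below by a counting argument on how many trees each step can eliminate. By \cref{lem:ball-bound}, $|B(T_{i+1}, d)| \le m^d n^d$, so $|R_i \setminus R_{i+1}| \le m^d n^d$ at every step. Since the process terminates only once the whole set $\trees(G)$ has been covered by the balls around members of $S$, we have $|\trees(G)| = \sum_{i=1}^{|S|} |R_{i-1} \setminus R_i| \le |S| \cdot m^d n^d$, yielding $|S| \ge |\trees(G)| / (m^d n^d)$.

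There is essentially no obstacle here: the argument is a one-paragraph greedy packing proof of the standard Gilbert--Varshamov flavour, and the only non-trivial ingredient (the ball-volume bound) has already been established in \cref{lem:ball-bound}. The only minor care point is making sure the process is well-defined (which it is, as $\trees(G)$ is finite) and that the separation holds symmetrically, which follows because $\ham$ is a metric (symmetric).
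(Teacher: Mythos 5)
Your proof is correct and takes essentially the same route as the paper: greedily pick a tree, discard its $d$-ball, and bound the number of iterations using the ball-volume bound from \cref{lem:ball-bound}. In fact, your bookkeeping is slightly more careful than the paper's: the paper removes $\{T' \mid \ham(T, T') < d\}$ at each step, which only guarantees $\ham(T_j, T_k) \ge d$ for the surviving trees, whereas you correctly remove the closed ball $B(T, d) = \{T' \mid \ham(T, T') \le d\}$, which is what directly yields the strict inequality $\ham(T_1, T_2) > d$ claimed in the statement (and is precisely the set whose size \cref{lem:ball-bound} bounds).
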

\begin{proof}
	We will construct $S$ greedily: start with $X = \trees(G)$. As long as $X$
	is nonempty, pick arbitrary $T \in X$ and add it to $S$. Then, set $X
	\coloneqq X \setminus \{\,T' \mid \ham(T, T') < d\,\}$, and repeat. By
	\cref{lem:ball-bound}, in each step we remove at most $m^d
	n^d$ elements from a set of size $|\trees(G)|$, and therefore only stop
	after $S$ contains at least $|\trees(G)| / (m^dn^d)$ elements.
\end{proof}

Now we are ready to prove \cref{lem:many-trees-clique}:

\begin{proof}[Proof of \cref{lem:many-trees-clique}]
	We invoke \cref{lem:tree-set-by-ball-cutting} on $G$ with $d = (n-2)/6$.
	Since $G$ is a clique, we have $|\trees(G)| = n^{n - 2}$. Clearly, $d =
	\Theta(n)$ and $|S| \le |\trees(G)| = 2^{\O(n\log n)}$, and we can write
	\[
		|S|
		\ge \frac{|\trees(G)|}{m^d n^d}
		\ge \frac{|\trees(G)|}{n^{3d}}
		= \frac{|\trees(G)|}{n^{(n - 2) / 2}}
		= \frac{|\trees(G)|}{\sqrt{|\trees(G)|}} =\sqrt{|\trees(G)|} = 2^{\Omega(n \log n)}.
	\]
\end{proof}

\cref{lem:ball-bound} also gives us the following relationship between $|\trees(G)|$ and $D$:

\begin{lemma}
	\label{lem:numtrees-vs-diam}
	For a graph $G$ with $\diam(G) = D$, it holds that $|\trees(G)| \le 2^{3D\log_2 n}$.
\end{lemma}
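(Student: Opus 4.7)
The plan is to derive this bound as a direct corollary of the ball-volume bound in \cref{lem:ball-bound}, exploiting the fact that by the definition of $D = \diam(G)$, the entire space $\trees(G)$ fits inside a single $D$-ball.

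First I would fix an arbitrary spanning tree $T \in \trees(G)$. By definition of the diameter, every other spanning tree $T' \in \trees(G)$ satisfies $\ham(T, T') \le D$, so
\[
    |\trees(G)| \le |\{\,T' \in \trees(G) \mid \ham(T, T') \le D\,\}|.
\]
Then I would apply \cref{lem:ball-bound} with $d = D$ to upper-bound the right-hand side by $m^D n^D$.

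The final step is to convert this into a bound involving only $n$. Since $G$ is a simple graph, $m \le \binom{n}{2} \le n^2$, and therefore
\[
    |\trees(G)| \le m^D n^D \le n^{2D} \cdot n^D = n^{3D} = 2^{3D \log_2 n},
\]
as required. There is no real obstacle here; the only thing worth double-checking is the crude bound $m \le n^2$ (which is fine, and one could even write $m \le n^2/2$ if a slightly tighter constant were desired, though the stated bound already absorbs it).
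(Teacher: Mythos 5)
Your proposal is correct and matches the paper's own proof essentially line for line: fix an arbitrary $T$, observe that by definition of $D$ the whole of $\trees(G)$ lies in the $D$-ball around $T$, apply \cref{lem:ball-bound} to get $m^D n^D$, then use $m \le n^2$ to conclude $n^{3D} = 2^{3D\log_2 n}$.
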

\begin{proof}
	Take any $T \in \trees(G)$ and define $S = \{\,T' \in \trees(G) \mid \ham(T, T') \le D\,\}$. Necessarily $S = \trees(G)$ by the definition of $D$. Now we invoke \cref{lem:ball-bound} to conclude that $|\trees(G)| = |S| \le m^D n^D \le n^{3D} = 2^{3D\log_2 n}$.
\end{proof}

\section{Universal Near-Optimality via the Exponential Mechanism}
\label{sec:upper-bound}

In this section, we start by showing that \Cref{alg:laplace} is in fact not universally optimal for $\siminf$.
The main contribution of this section is then that we prove that the exponential mechanism is universally near-optimal with respect to both $\simone$ and $\siminf$. We also show that it can be implemented in polynomial time by relying on a result of \citet{mst-sampling-in-matrix-multiplication}.

Our goal is to prove the following corollary. It follows from \cref{thm:ub-exp-mechanism} (which states the upper bounds and time complexity) and \cref{thm:lb-main} (which states the lower bounds).

\begin{corollary}
	\label{cor:exp-mechanism-optimal}
 For any $\eps = \O(1)$, the exponential mechanism with loss function $\mu(\vew, T) = \vew(T)$ is universally optimal up to an $\O(\log n)$ factor for releasing the MST, in both the $\ell_1$ and the $\ell_\infty$ neighbor relations. It can be implemented in the matrix multiplication time~$\O(n^\omega)$.
\end{corollary}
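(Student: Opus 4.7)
The plan is to prove the corollary by combining the upper-bound theorem \cref{thm:ub-exp-mechanism} (which the paper is about to state) with the lower bounds of \cref{thm:lb-main}. The upper bound to be established asserts that the exponential mechanism achieves expected error $\O(D \log n / \eps)$ under $\simone$ and $\O(D^2 \log n / \eps)$ under $\siminf$, and can be implemented in $\O(n^\omega)$ time. Once this is proved, taking ratios against the lower bounds $\Omega(D/\eps) - \O(1)$ and $\Omega(D^2/\eps) - \O(D)$ from \cref{thm:lb-main}, and noting $D \ge 1$ (since $G$ is assumed to have at least two spanning trees) together with $\eps = \O(1)$, the additive terms become lower-order and the universal-optimality ratio is $\O(\log n)$ in both neighbor relations.

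For the $\simone$ upper bound, the loss $\mu(\vew, T) = \vew(T)$ has global sensitivity $\Delta \le 1$: for $\vew \simone \vew'$, we have $|\vew(T) - \vew'(T)| \le \|\vew - \vew'\|_1 \le 1$. Plugging into \cref{lem:exp-mechanism-guarantees} gives expected error at most $\tfrac{2 \log |\trees(G)|}{\eps}$, which by \cref{lem:numtrees-vs-diam} is $\O(D \log n / \eps)$.

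The $\siminf$ case is the main technical step: the naive global sensitivity is $\Delta \le n-1$ and yields only $\O(nD \log n / \eps)$. The key observation is that, since the exponential mechanism's output distribution is invariant under additive shifts of the loss, what actually controls privacy is not $|\vew(T) - \vew'(T)|$ but the relative quantity $|(\vew'-\vew)(T) - (\vew'-\vew)(T')|$ between pairs of trees. Writing $p_T(\vew) \propto \exp(-c\vew(T))$ and expanding:
\[
	\frac{p_T(\vew)}{p_T(\vew')} = \E_{T' \sim p(\cdot \mid \vew)}\!\left[\exp\!\left(c\bigl[(\vew'-\vew)(T) - (\vew'-\vew)(T')\bigr]\right)\right].
\]
Since $T, T' \in \trees(G)$ differ in at most $2D$ edges and $\|\vew-\vew'\|_\infty \le 1$, the bracketed quantity has absolute value at most $2D$; hence the ratio is at most $\exp(2cD)$. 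Setting $c = \eps/(2D)$ makes the mechanism $\eps$-DP, and the resulting expected error is $\log|\trees(G)|/c \le 6 D^2 \log_2 n / \eps = \O(D^2 \log n / \eps)$ by \cref{lem:numtrees-vs-diam}. This relative-sensitivity analysis is the main obstacle in the plan; once one commits to reasoning about pairwise loss differences rather than absolute losses, the bound follows easily from the fact that any two spanning trees of $G$ differ in at most $2D$ edges.

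For the $\O(n^\omega)$ runtime, the distribution $\Pr[T] \propto \exp(-c\vew(T)) = \prod_{e \in T} \exp(-c\vew(e))$ is a weighted spanning tree distribution with edge weights $\lambda_e = \exp(-c\vew(e))$, exactly the setting of \citet{mst-sampling-in-matrix-multiplication}, whose sampler runs in $\O(n^\omega)$ time. Combining the upper bound theorem with \cref{thm:lb-main} as described in the first paragraph then finishes the proof.
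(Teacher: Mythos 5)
Your proposal is correct, and the overall plan---prove the upper-bound theorem \cref{thm:ub-exp-mechanism} for the exponential mechanism and combine it with the lower bounds of \cref{thm:lb-main}---matches the paper's exactly, as do the $\simone$ upper bound (global sensitivity $\le 1$) and the $\O(n^\omega)$ runtime via \citet{mst-sampling-in-matrix-multiplication}. The one place you genuinely diverge is the $\siminf$ privacy analysis. The paper does not open up the exponential mechanism: it fixes a reference tree $T_0 \in \trees(G)$, passes to the shifted loss $\mu'(\vew,T)=\vew(T)-\vew(T_0)$ via the separately stated equivalence lemma \cref{lem:equivalent-utilities}, shows $\mu'$ has global sensitivity at most $2R_0 \le 2D$ where $R_0 = \max_T \ham(T_0,T)$, and then invokes the black-box guarantee \cref{lem:exp-mechanism-guarantees} with $\lambda=\eps/(4R_0)$, giving error $\le 4D\log|\trees(G)|/\eps$. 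You instead rederive the privacy of the exponential mechanism from scratch, expressing the probability ratio as $\E_{T'}\left[\exp\left(c\left[(\vew'-\vew)(T)-(\vew'-\vew)(T')\right]\right)\right]$ and bounding the exponent directly by $2cD$ because any two spanning trees differ in at most $2D$ edges and $\|\vew-\vew'\|_\infty\le1$. Both arguments rest on the same underlying insight---the mechanism is shift-invariant, so only pairwise loss differences between trees matter, and these are controlled by the Hamming diameter---but yours bypasses the auxiliary lemma and the usual factor-of-two slack in the black-box exponential-mechanism privacy proof, letting you take $c=\eps/(2D)$ and gaining a factor of roughly two in the constant. The paper's route, by contrast, is more modular and has a small practical advantage: its scaling parameter $\eps/(4R_0)$ is directly computable in linear time via a single MST computation, while yours nominally requires the diameter $D$; to make it algorithmic you would replace $D$ by the $2$-approximation $R_0$ (which still gives $\eps$-DP and the same $\O(D^2\log n/\eps)$ bound). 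Either way the claim follows once $\log|\trees(G)|=\O(D\log n)$ is invoked from \cref{lem:numtrees-vs-diam}.
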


We now show that \Cref{alg:laplace} is neither worst-case, nor universally optimal. Note also that when we are using the $\siminf$ neighbor relation, the noise magnitude used by the algorithm is indeed optimal in the sense that any lower noise magnitude will not lead to the weights themselves being private after adding the noise. This can be easily seen as follows: With the current amount of noise added, if each weight changes by 1, we lose up to $\eps/m$ privacy on each edge. Since the composition theorem for pure differential privacy is tight, we thus may indeed lose up to $\eps$ privacy in total. Any lower amount of noise would not give $\eps$-differential privacy.
\begin{claim}
	\label{cl:postprocessing-not-optimal-for-inf}
	Denote \cref{alg:laplace} used with the neighbor relation $\siminf$ as $\algo$. For every graph $G$, there exist weights $\vew$ such that:
	\[
		\E_{T\sim \algo(G,\vew)} [\vew(T)] = \vew(T^*) + \Omega\left(\frac{mD}{\eps}\right) - \O(1).
	\]
\end{claim}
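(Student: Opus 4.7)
The strategy is to exhibit weights on which \cref{alg:laplace} incurs error $\Omega(mD/\eps)$, exploiting the combinatorial richness of two far-apart spanning trees together with the fact that the algorithm's noise has scale $b = m/\eps$ per edge. Fix spanning trees $T_a, T_b \in \trees(G)$ with $\ham(T_a, T_b) = D$, and set $\vew(e) = 0$ for $e \in T_a$, $\vew(e) = c := b/8$ for $e \in T_b \setminus T_a$, and $\vew(e) = H := 10 b\log m$ for $e \notin T_a \cup T_b$. Then $T^* = T_a$ is the unique MST (of weight $0$), and by standard concentration of the Laplace distribution, the noise on each edge is at most $H/2$ with probability $1 - m^{-\Omega(1)}$, so that the output $T$ of the algorithm lies entirely in $T_a \cup T_b$ with high probability. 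Conditioning on this event, $\vew(T) = c\cdot|T \cap (T_b \setminus T_a)|$, and the claim reduces to showing $\E[|T \cap (T_b \setminus T_a)|] = \Omega(D)$.

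To establish this, I would invoke Brualdi's symmetric exchange theorem for the graphic matroid of $G$ to obtain a bijection $\phi : T_b \setminus T_a \to T_a \setminus T_b$ such that $T_a - \phi(e) + e \in \trees(G)$ for every $e \in T_b \setminus T_a$. For each $e$, define the event $A_e := \{X_{\phi(e)} - X_e > 2c\}$, indicating that the single swap $T_a - \phi(e) + e$ strictly improves the noisy weight over $T_a$. Because $\phi$ is a bijection, different $A_e$'s depend on disjoint pairs of edges and are therefore mutually independent; for $c = b/8$, the difference of two $\Lap(b)$ variables exceeds $b/4$ with probability at least $0.4$, so each $A_e$ holds with constant probability $p > 0$. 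A Chernoff bound then gives $|Q| \ge pD/2 = \Omega(D)$ with constant probability, where $Q := \{e : A_e \text{ holds}\}$.

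The main obstacle is translating the abundance of beneficial \emph{single}-swaps into a bound of $\Omega(D)$ on $|T \cap (T_b \setminus T_a)|$ for the noisy MST itself: the Brualdi matching is not generally \emph{simultaneously} swappable (one can already check this on $K_4$, where $T_a \setminus \phi(Q) \cup Q$ may form a cycle rather than a tree). To overcome this, I would invoke \cref{lem:exchange++} which, for the random set $Q$, produces \emph{some} spanning tree $T_Q$ with $T_Q \setminus T_a = Q$, so in particular $|T_Q \cap (T_b \setminus T_a)| = |Q|$. A matroid-exchange argument (using that the MST-optimal choice of the removed set $T_a \setminus T_Q$, subject to feasibility, is at least as beneficial as a fixed feasible choice) then yields $\hat\vew(T_Q) < \hat\vew(T_a)$ whenever sufficiently many $A_e$'s hold; combined with $\hat\vew(T) \le \hat\vew(T_Q)$ and concentration of the Laplace noise, this forces $|T \cap (T_b \setminus T_a)| = \Omega(|Q|) = \Omega(D)$. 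Putting everything together yields $\E[\vew(T)] \ge c\cdot\Omega(D) = \Omega(mD/\eps) - O(1)$, with the additive $-O(1)$ absorbing the low-probability noise-concentration failures.
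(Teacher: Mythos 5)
Your proposal takes a genuinely different route from the paper: you construct explicit weights and try to directly lower-bound the error of the Laplace mechanism on them, whereas the paper proves the claim in two lines by a reduction. The paper observes that \cref{alg:laplace} with $\siminf$ (i.e., with scale $b = m/\eps$) is also an $(\eps/m)$-differentially private mechanism with respect to $\simone$, and then simply invokes the first part of \cref{thm:lb-main} with privacy parameter $\eps/m$ to get the $\Omega(D/(\eps/m)) - \O(1) = \Omega(mD/\eps) - \O(1)$ bound. This reuses all the machinery built for the $\ell_1$ lower bound and avoids re-analyzing the mechanism's behavior.

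Your direct argument, apart from being much heavier, has a genuine gap at the last step. You correctly identify the obstacle (the Brualdi bijection is not simultaneously swappable) and propose to use \cref{lem:exchange++} to produce a single tree $T_Q$ with $T_Q \setminus T_a = Q$ and then argue that this forces the noisy MST $T$ to satisfy $|T \cap (T_b \setminus T_a)| = \Omega(|Q|)$. But that inference does not follow. Knowing $\hat\vew(T_Q) < \hat\vew(T_a)$ (even granting that this could be established, which itself is unclear since $T_a \setminus T_Q$ is not $\phi(Q)$ and so the favorable events $A_e$ need not correspond to the edges actually removed) and $\hat\vew(T) \le \hat\vew(T_Q)$ tells you only that $T$ has small \emph{noisy} weight; it places no lower bound whatsoever on $|T \cap (T_b \setminus T_a)|$. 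The noisy MST could achieve an even smaller $\hat\vew$ value by swapping in just a handful of edges from $T_b \setminus T_a$ — the ones where the noise happens to be most favorable — rather than $\Omega(D)$ of them. To make a direct argument work, you would need something like a per-edge bound $\Pr[e \in T] = \Omega(1)$ for each $e \in T_b \setminus T_a$, which requires controlling, for every such $e$, the probability that $e$ is the strict maximum on \emph{some} cycle through $e$ in $T_a \cup T_b$ — a union over cycles that your current argument does not handle. (There is also a minor parameter issue: with $H = 10b\log m$, the noisy weight ranges of $T_a \cup T_b$ edges and of the remaining edges overlap, so the conditioning on $T \subseteq T_a \cup T_b$ needs a slightly larger $H$; this is easily fixed, unlike the main gap.) I would recommend using the reduction to the $\simone$ lower bound instead.
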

\begin{proof}
	The claim follows immediately from the fact that $\algo$ is also $\eps/m$-differentially private with respect to $\simone$, and thus by the first part of \cref{thm:lb-main} with $\eps' = \eps/m$, such weights $\vew$ must exist.
\end{proof}

Below, we will prove that one can in fact achieve an expected error of $\O(D^2 \log n/\eps)$. This implies that \Cref{alg:laplace} is in fact neither universally, nor worst-case optimal for $\siminf$.

In the rest of this section, our goal is to prove that the exponential mechanism can be implemented in polynomial time and that it is universally optimal for releasing the MST. We start by stating a useful result on efficiently sampling spanning trees.

\begin{lemma}\label{lem:exp-mechanism-fast}
	There is an algorithm that, given an unweighted graph $G$, a weight vector $\vew$, and a parameter $\lambda \in \R$,
	samples a spanning tree of $G$ such that the probability that $T \in \trees(G)$ is
	returned is proportional to $\exp(-\lambda \vew(T))$. It runs in matrix multiplication time $\O(n^\omega)$.
\end{lemma}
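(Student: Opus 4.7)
The plan is to observe that the target distribution is exactly a standard weighted spanning tree distribution, and then invoke the sampling result of \cite{mst-sampling-in-matrix-multiplication} as a black box.

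First I would rewrite the exponent in product form: since $\vew(T) = \sum_{e \in T} \vew(e)$, we have
\[
	\exp(-\lambda\vew(T)) = \prod_{e \in T} \exp(-\lambda \vew(e)).
\]
Defining new (positive) edge weights $x_e \coloneqq \exp(-\lambda\vew(e))$, the goal ``sample $T$ with probability proportional to $\exp(-\lambda\vew(T))$'' becomes exactly ``sample $T \in \trees(G)$ with probability proportional to $\prod_{e \in T} x_e$.'' This latter distribution is the canonical weighted (or $\lambda$-random) spanning tree distribution, which is the object the algorithm of \cite{mst-sampling-in-matrix-multiplication} samples from in matrix-multiplication time $\O(n^\omega)$. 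Invoking that algorithm on $G$ with weights $x_e$ thus completes the proof modulo verifying that the input is in the form expected by the cited routine.

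The only subtlety I anticipate is numerical: individual values $x_e = \exp(-\lambda\vew(e))$ can be astronomically large or small, which may be awkward in a fixed-precision arithmetic model. I would address this by first shifting the weight vector, replacing $\vew$ by $\vew - (\min_{e} \vew(e)) \cdot \mathbf{1}$ before computing the $x_e$. Since every spanning tree has the same number of edges $n - 1$, this shift multiplies every tree's exponential weight by the same constant, hence does not change the conditional distribution on $\trees(G)$; after the shift, all $x_e$ lie in $(0, 1]$. If the weight range is still too wide for the ambient arithmetic model (e.g.~huge $\lambda$), one can further truncate $x_e$ at some threshold, which perturbs the distribution only negligibly and is harmless for the downstream application in the exponential mechanism where $\lambda = \eps/(2\Delta)$ is moderate.

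The main ``obstacle,'' in other words, is not algorithmic at all: it is purely the one-line bookkeeping of turning an additive exponent into a product of per-edge factors so that the problem matches the interface of the cited spanning tree sampler. The genuine technical work, namely the $\O(n^\omega)$ sampling itself (which uses Kirchhoff-style determinantal identities and fast linear algebra), is entirely outsourced to \cite{mst-sampling-in-matrix-multiplication}.
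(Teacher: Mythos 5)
Your reduction from the additive exponent to per-edge multiplicative weights $x_e = \exp(-\lambda\vew(e))$ is correct and is exactly the bookkeeping needed to cast the problem as weighted spanning tree sampling; the shifting trick you add for numerical normalization is also valid, since every spanning tree has exactly $n-1$ edges, so a uniform additive shift of $\vew$ multiplies every tree's weight by the same constant and leaves the conditional distribution on $\trees(G)$ unchanged.

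The one place you go wrong is in how you characterize what the cited algorithm actually proves. As the paper's own (one-sentence) proof points out, \cite{mst-sampling-in-matrix-multiplication} only treats the \emph{unweighted} case, i.e.\ uniform spanning tree sampling, and the paper relies on a separate remark in \cite{mst-sampling-kyng} that the determinantal machinery extends to arbitrary positive edge weights. Your proposal asserts that the algorithm of \cite{mst-sampling-in-matrix-multiplication} directly samples from the weighted distribution, which overstates that paper's result; you would need to either cite the weighted generalization or briefly argue that the Kirchhoff-matrix-tree approach carries over to a weighted Laplacian. Apart from this bibliographic gap, your approach is the same as the paper's, and the numerical-scaling observation is a reasonable addition that the paper does not discuss (it implicitly works in a real-arithmetic model).
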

\begin{proof}
	Such an algorithm is provided in \citet{mst-sampling-in-matrix-multiplication}. The original paper only deals with unweighted graphs, but it is mentioned in \citet{mst-sampling-kyng} that this approach is actually easily generalized to the weighted case.
\end{proof}

Note that generally, any exact spanning tree sampling algorithm that supports weighted graphs works for \cref{lem:exp-mechanism-fast}. There are faster algorithms available, but every faster algorithm known to us either does not sample from the exact distribution (failing or sampling from a different distribution with some small probability $\delta$), or does not support weighted graphs out of the box.

The following lemma allows us to analyze the exponential mechanism more tightly by changing the loss function so that the outcome probabilities do not change, but the global sensitivity decreases.

\begin{lemma}
	\label{lem:equivalent-utilities}
	Let $\mu, \mu' : \X \times \Y \to \R$ be two loss functions related in
	the following way: for each $x \in \X$, there exists $c_x \in \R$ such that
	for all $y \in \Y$, $\mu'(x, y) = \mu(x, y) + c_x$.

	Given $\lambda \in \R$, let $\algo$ and $\algo'$ be instantiations of the exponential mechanism that, given $x \in \X$, sample $y \in \Y$ with probability proportional to $\exp(-\lambda\mu(x, y))$ and $\exp(-\lambda\mu'(x, y))$, respectively. Then $\algo$ and $\algo'$ are equivalent, that is, for each $x$, they return the same distribution on $\Y$.
\end{lemma}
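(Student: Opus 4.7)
The plan is purely computational: fix an arbitrary input $x \in \X$, write down the probability mass function of $\algo'(x)$ from the definition of the exponential mechanism, substitute $\mu'(x,y) = \mu(x,y) + c_x$, and observe that the $y$-independent factor $\exp(-\lambda c_x)$ pulls out of both the numerator and every term of the normalization sum, so it cancels.

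In detail, I would first write, directly from the definition,
\[
\Pr[\algo'(x) = y] = \frac{\exp(-\lambda\mu'(x,y))}{\sum_{y' \in \Y}\exp(-\lambda\mu'(x,y'))}.
\]
Substituting the assumed identity $\mu'(x,\cdot) = \mu(x,\cdot) + c_x$ gives
\[
\Pr[\algo'(x) = y] = \frac{\exp(-\lambda c_x)\exp(-\lambda\mu(x,y))}{\exp(-\lambda c_x)\sum_{y' \in \Y}\exp(-\lambda\mu(x,y'))},
\]
and the common factor $\exp(-\lambda c_x)$ cancels, leaving exactly the expression that defines $\Pr[\algo(x) = y]$. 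Since $x$ was arbitrary, the two distributions coincide on every input.

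There is essentially no obstacle here; the one subtlety worth noting is that the cancellation requires the normalization constants to be finite and nonzero, but since they differ only by the strictly positive finite factor $\exp(-\lambda c_x)$, one is well-defined iff the other is, so both exponential mechanisms are simultaneously valid. Thus no extra assumption beyond the ones already implicit in the statement (well-definedness of the exponential mechanism) is needed.
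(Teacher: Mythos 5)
Your proof is correct and follows essentially the same route as the paper's: write the probability mass function of $\algo'(x)$, substitute $\mu'(x,\cdot) = \mu(x,\cdot) + c_x$, factor out $\exp(-\lambda c_x)$ from numerator and denominator, and cancel. The brief remark on well-definedness is a reasonable extra observation but not a departure in approach.
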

\begin{proof}
	$\algo'(x)$ returns $y$ with probability
	\[
		\frac{\exp(-\lambda\mu'(x, y))}{\sum_{y'\in\Y} \exp(-\lambda\mu'(x, y'))}
		=
		\frac{\exp(-\lambda c_x) \cdot \exp(-\lambda\mu(x, y))}{\sum_{y'\in\Y} \exp(-\lambda c_x) \cdot \exp(-\lambda\mu(x, y'))}
		=
		\frac{\exp(-\lambda\mu(x, y))}{\sum_{y'\in\Y} \exp(-\lambda\mu(x, y'))},
	\]
	which is exactly the probability of $\algo(x)$ returning $y$.
\end{proof}

\begin{theorem}
	\label{thm:ub-exp-mechanism}
	There is an $\O(n^\omega)$-time mechanism $\algo$ for MST, $\eps$-differentially private with respect to $\simone$, such that, for every weighted graph $(G, \vew)$,
	\[
		\E_{T \sim \algo(G, \vew)} [\vew(T)] \le \vew(T^*) + \frac{2\log|\trees(G)|}{\eps}
		= \vew(T^*) + \O\left(\frac{D\log n}{\eps}\right),
	\]
	where $T^*$ is the MST of $(G, \vew)$ and $D = \diam(G)$.
	Furthermore, there is an $\O(n^\omega)$-time mechanism $\algo$ for MST, $\eps$-differentially private with respect to $\siminf$, such that, for every weighted graph $(G, \vew)$,
	\[
		\E_{T \sim \algo(G, \vew)} [\vew(T)] \le \vew(T^*) + \frac{4D\log|\trees(G)|}{\eps}
		= \vew(T^*) + \O\left(\frac{D^2\log n}{\eps}\right).
	\]
\end{theorem}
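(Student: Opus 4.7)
The plan is to apply the exponential mechanism in both cases and invoke \cref{lem:exp-mechanism-fast} for efficient sampling. The $\siminf$ case additionally requires a shift of the loss function via \cref{lem:equivalent-utilities} to reduce the sensitivity.

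For $\simone$, I would use $\mu(\vew, T) = \vew(T)$ directly. Its global sensitivity is at most $1$, since for any $\vew \simone \vew'$ we have $|\vew(T) - \vew'(T)| = |\sum_{e \in T}(\vew(e) - \vew'(e))| \le \|\vew - \vew'\|_1 \le 1$. Plugging $\Delta = 1$ and outcome set of size $|\trees(G)|$ into \cref{lem:exp-mechanism-guarantees} yields $\E[\vew(T)] \le \vew(T^*) + 2\log|\trees(G)|/\eps$, and \cref{lem:numtrees-vs-diam} converts $\log|\trees(G)|$ to $\O(D\log n)$.

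The main obstacle is the $\siminf$ case, where a direct application of the exponential mechanism to $\mu(\vew, T) = \vew(T)$ only yields $\Delta \le n-1$, which is too weak. The key idea is to fix any canonical spanning tree $T_0 \in \trees(G)$ independent of the weights (e.g., the lexicographically smallest one), and consider the shifted loss $\mu'(\vew, T) = \vew(T) - \vew(T_0)$. Since $\mu'(\vew, T) - \mu(\vew, T) = -\vew(T_0)$ depends only on $\vew$, \cref{lem:equivalent-utilities} guarantees that the exponential mechanisms based on $\mu$ and $\mu'$ produce identical output distributions, so privacy and runtime are unchanged. Next I would bound the global sensitivity of $\mu'$: for any $\vew \siminf \vew'$,
\[
\mu'(\vew, T) - \mu'(\vew', T) = (\vew - \vew')(T \setminus T_0) - (\vew - \vew')(T_0 \setminus T),
\]
whose absolute value is at most $(|T \setminus T_0| + |T_0 \setminus T|) \cdot \|\vew - \vew'\|_\infty = 2\ham(T, T_0) \le 2D$. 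Plugging $\Delta = 2D$ into \cref{lem:exp-mechanism-guarantees} gives $\E[\vew(T) - \vew(T_0)] \le \vew(T^*) - \vew(T_0) + 4D\log|\trees(G)|/\eps$, which rearranges to the claimed bound; \cref{lem:numtrees-vs-diam} again produces the $\O(D^2\log n/\eps)$ form.

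For efficiency, in both cases the sampling distribution is proportional to $\exp(-\lambda \vew(T))$ for $\lambda = \eps/(2\Delta)$, since in the $\siminf$ case the factor $\exp(\lambda \vew(T_0))$ is a $T$-independent constant that drops out of the proportionality. A single call to \cref{lem:exp-mechanism-fast} with this $\lambda$ therefore produces a sample in $\O(n^\omega)$ time.
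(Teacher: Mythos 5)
Your proof is essentially the paper's proof: for $\simone$, use $\mu(\vew,T)=\vew(T)$ directly with $\Delta\le 1$; for $\siminf$, shift to $\mu'(\vew,T)=\vew(T)-\vew(T_0)$ via \cref{lem:equivalent-utilities}, bound the sensitivity by $2\ham(T,T_0)\le 2D$, and then observe that the shift drops out of the sampling distribution so \cref{lem:exp-mechanism-fast} applies unchanged.

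The one point you gloss over is how the mechanism actually obtains the scale $\lambda=\eps/(4D)$ at runtime: you need to compute (or bound) $D=\diam(G)$ to set $\lambda$, and it is not obvious how to do this efficiently. The paper avoids this by working with $R_0 \coloneqq \max_{T\in\trees(G)}\ham(T_0,T)$ in place of $D$. Note that $R_0$ is itself a valid sensitivity bound ($\Delta\le 2R_0$), satisfies $R_0\le D$ so the stated error bound still follows, and can be computed in linear time as $\ham(T_0,T^*)$ where $T^*$ is the MST of $(G,-\oneweights_{T_0})$ (\cref{fact:oneweights-properties}), since all weights are $0$ or $-1$. With $\lambda=\eps/(4R_0)$ your argument goes through verbatim and the algorithm is fully implementable; without addressing this, the choice of $\lambda$ is under-specified.
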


\begin{proof}
	The asymptotic bounds on the error follow from the exact ones by \cref{lem:numtrees-vs-diam}\rh{je tahle věta ok?}, thus we will only focus on the exact inequalities. Let us assume the $\siminf$ case first; we will deal with the $\simone$ case at the end of the proof.

	$\algo$ will be an instantiation of the exponential
	mechanism from \cref{lem:exp-mechanism-fast}, with $\mu(\vew, T) \coloneq \vew(T)$, and with $\lambda$ determined later.

	We will use~\cref{lem:equivalent-utilities} to analyze an equivalent exponential mechanism $\algo'$ that uses the loss function $\mu'$ defined as follows: fix globally some $T_0 \in \trees(G)$ and define $\mu'(\vew, T) \coloneq \vew(T) - \vew(T_0)$.
	It holds that, for every fixed $\vew$, we can write $\mu'(\vew, T) = \mu(\vew, T) + c_\vew$, where $c_\vew = -\vew(T_0)$ does not depend on $T$, and thus $\algo$ and $\algo'$ are equivalent by \cref{lem:equivalent-utilities}.

	Let us choose the right $\lambda$ for $\algo'$ (and thus also for $\algo$). By the standard properties of the exponential mechanism (see \cref{lem:exp-mechanism-guarantees}), $\algo'$ is $\eps$-differentially private if we choose $\lambda \le \eps/(2\Delta)$, where $\Delta$ is the global sensitivity of $\mu'$, as defined in \cref{lem:exp-mechanism-guarantees}. As a next step, we bound $\Delta$.
	For each $\vew \siminf \vew'$, we have:
	\begin{align*}
	\left|\mu'(\vew, T) - \mu'(\vew', T)\right|
	&=
		\left|(\vew - \vew')(T) - (\vew - \vew')(T_0)\right|
	\\&=
	\bigg|\sum_{e \in T \setminus T_0} (\vew - \vew')(e) - \sum_{e \in T_0 \setminus T} (\vew - \vew')(e)\bigg|
	\\&\le
	\sum_{e \in T \setminus T_0} \left|(\vew - \vew')(e)\right| + \sum_{e \in T_0 \setminus T} \left|(\vew - \vew')(e)\right|
	\\&\le
	2\ham(T_0, T)
	\le 2R_0,
	\end{align*}
	for $R_0 \coloneqq \max_{T\in \trees(G)} \ham(T_0, T)$. We
	used the fact that edges present in both $T$ and $T_0$ do not count
	towards the result. Hence, $\Delta \le 2R_0 \le 2D$. If we thus choose $\lambda = \frac{\eps}{4R_0}$ in $\algo'$,
	we immediately get from \cref{lem:exp-mechanism-guarantees} that $\algo'$ is $\eps$-differentialy private and the expected error is at most $4R_0\log|\trees(G)|/\eps \le 4D\log|\trees(G)|/\eps$, as needed. By the equivalence of $\algo$ and $\algo'$, the same holds for $\algo$.

	Finally, note that $\algo$ can compute $R_0$ (and thus $\lambda$)
	quickly: namely, if we denote by $T^*$ the MST of a graph $(G, -\oneweights_{T_0})$, then $R_0 = \ham(T_0, T^*)$. That is because $T^*$ minimizes the expression $-\oneweights_{T_0}(T)$, which, by \cref{fact:oneweights-properties}, is equal to $-\ham(T_0, T)$, just as needed. The MST can be computed in linear time using e.g.~the Jarník-Prim algorithm with a double-ended queue as the priority queue, as all weights are either $-1$ or $0$.

	\rhinline{Přidáno:}

	Let us now deal with the $\simone$ case. $\algo$ will again be an instantiation of the exponential mechanism. This time, we set $\lambda = \eps / 2$ and analyze $\algo$ directly, without the use of \cref{lem:equivalent-utilities}. For any $\vew \simone \vew'$ and $T \in \trees(G)$, we immediately have $|\vew(T) - \vew'(T)| \le \|\vew - \vew'\|_1 \le 1$, and thus the global sensitivity of $\mu$ is $\Delta \le 1$. By \cref{lem:exp-mechanism-guarantees}, $\algo$ is $\eps$-differentially private and the expected error is at most $2 \log |\trees(G)| / \eps$, exactly as needed.
\end{proof}

Since $R_0$ computed in the above proof satisfies $D/2 \le R_0 \le D$, we
immediately get the following corollary:

\begin{corollary}
	A 2-approximation of $D = \diam(G)$ can be computed in linear time.
\end{corollary}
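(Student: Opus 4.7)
The plan is to read off the 2-approximation algorithm directly from the proof of \cref{thm:ub-exp-mechanism}, where $R_0 = \max_{T \in \trees(G)} \ham(T_0, T)$ is computed in linear time for any fixed $T_0 \in \trees(G)$. First I would pick any spanning tree $T_0$ of $G$ (for instance via a BFS or DFS in linear time), then compute $R_0$ exactly as described in the proof: form the weighted graph $(G, -\oneweights_{T_0})$ whose weights lie in $\{-1, 0\}$, find its MST $T^\star$ in linear time using e.g.\ Jarník–Prim with a double-ended queue, and output $2R_0$ where $R_0 = \ham(T_0, T^\star)$. All three steps run in linear time, so the only remaining task is to verify that $2R_0$ is within a factor of $2$ of $D = \diam(G)$.

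The upper bound $R_0 \le D$ is immediate from the definition of $\diam(G)$: every $T \in \trees(G)$ satisfies $\ham(T_0, T) \le D$, hence so does the maximum. For the lower bound $R_0 \ge D/2$, the key step is a triangle-inequality argument: fix any pair $T_a, T_b \in \trees(G)$ realizing $\ham(T_a, T_b) = D$, which exists by the definition of $D$. Since $\ham$ is a metric (as observed in \cref{sec:preliminaries-spanning-trees}), we have
\[
    D \;=\; \ham(T_a, T_b) \;\le\; \ham(T_a, T_0) + \ham(T_0, T_b),
\]
so at least one of the two summands is $\ge D/2$, and in particular $R_0 \ge \max(\ham(T_0, T_a), \ham(T_0, T_b)) \ge D/2$. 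Combining, $D/2 \le R_0 \le D$, which is equivalent to $R_0 \le D \le 2R_0$, so returning $2R_0$ gives the desired $2$-approximation.

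I do not anticipate a substantive obstacle: the algorithmic content (computing $R_0$ in linear time via a min–max duality using the weights $-\oneweights_{T_0}$) has already been established inside the proof of \cref{thm:ub-exp-mechanism}, and the approximation ratio is an immediate consequence of the metric property of $\ham$. The only minor bookkeeping is to note that $G$ is assumed to have at least two distinct spanning trees, so $D \ge 1$ and the approximation statement is non-vacuous; and that picking $T_0$ arbitrarily (without any knowledge of the pair $T_a, T_b$ attaining $D$) is sufficient, since the triangle-inequality bound holds for every $T_0$.
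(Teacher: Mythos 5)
Your proposal is correct and takes essentially the same approach as the paper: compute $R_0 = \max_{T} \ham(T_0, T)$ in linear time via the MST of $(G, -\oneweights_{T_0})$, and use $D/2 \le R_0 \le D$. You also explicitly supply the triangle-inequality argument for the lower bound $R_0 \ge D/2$, which the paper states without proof but clearly intends.
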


Note that the above algorithm is actually strictly stronger than \cref{alg:laplace}, as there are graphs where $\log |\trees(G)| = o(D\log n)$. We suspect that, in fact, the exponential mechanism is universally optimal for both $\simone$ and $\siminf$, but we were not able to prove a stronger lower bound.

\ifanonymous\else
\section*{Acknowledgements}
We would like to thank Rasmus Pagh for helpful discussions and hosting the first author at the University of Copenhagen. We would like to thank Bernhard Haeupler for helpful discussions.
\fi

\printbibliography

@article{dp-complexity,
  title={The complexity of differential privacy},
  author={Vadhan, Salil},
  journal={Tutorials on the Foundations of Cryptography: Dedicated to Oded Goldreich},
  pages={347--450},
  year={2017},
  publisher={Springer}
}

@inproceedings{fernandes2022universal,
   title={Universal Optimality and Robust Utility Bounds for Metric Differential Privacy},
   url={http://dx.doi.org/10.1109/CSF54842.2022.9919647},
   DOI={10.1109/csf54842.2022.9919647},
   booktitle={2022 IEEE 35th Computer Security Foundations Symposium (CSF)},
   publisher={IEEE},
   author={Fernandes, Natasha and McIver, Annabelle and Palamidessi, Catuscia and Ding, Ming},
   year={2022},
   month=aug
}

@inproceedings{ghosh2009universally,
  title={Universally utility-maximizing privacy mechanisms},
  author={Ghosh, Arpita and Roughgarden, Tim and Sundararajan, Mukund},
  booktitle={Proceedings of the forty-first annual ACM symposium on Theory of computing},
  pages={351--360},
  year={2009}
}

@article{mst-sampling-in-matrix-multiplication,
  title={Two algorithms for unranking arborescences},
  author={Colbourn, Charles J and Myrvold, Wendy J and Neufeld, Eugene},
  journal={Journal of Algorithms},
  volume={20},
  number={2},
  pages={268--281},
  year={1996},
  publisher={Elsevier}
}

@inproceedings{mst-sampling-kyng,
  title={Sampling random spanning trees faster than matrix multiplication},
  author={Durfee, David and Kyng, Rasmus and Peebles, John and Rao, Anup B and Sachdeva, Sushant},
  booktitle={Proceedings of the 49th Annual ACM SIGACT Symposium on Theory of Computing},
  pages={730--742},
  year={2017}
}

@article{gilbert-varshamov1,
  title={A comparison of signalling alphabets},
  author={Gilbert, Edgar N},
  journal={The Bell system technical journal},
  volume={31},
  number={3},
  pages={504--522},
  year={1952},
  publisher={Nokia Bell Labs}
}

@article{gilbert-varshamov2,
  title={Estimate of the number of signals in error correcting codes},
  author={Varshamov, Rom Rubenovich},
  journal={Doklady Akad. Nauk, SSSR},
  volume={117},
  pages={739--741},
  year={1957}
}

@misc{flum2006parameterized,
  title={Parameterized Complexity Theory Springer},
  author={Flum, J and Grohe, M},
  year={2006},
  publisher={Heidelberg}
}

@article{huang2021instance,
  title={Instance-optimal mean estimation under differential privacy},
  author={Huang, Ziyue and Liang, Yuting and Yi, Ke},
  journal={Advances in Neural Information Processing Systems},
  volume={34},
  pages={25993--26004},
  year={2021}
}

@inproceedings{dong2022nearly,
  title={A Nearly Instance-optimal Differentially Private Mechanism for Conjunctive Queries},
  author={Dong, Wei and Yi, Ke},
  booktitle={Proceedings of the 41st ACM SIGMOD-SIGACT-SIGAI Symposium on Principles of Database Systems},
  pages={213--225},
  year={2022}
}

@inproceedings{blasiok2019towards,
  title={Towards instance-optimal private query release},
  author={B{\l}asiok, Jaroslaw and Bun, Mark and Nikolov, Aleksandar and Steinke, Thomas},
  booktitle={Proceedings of the Thirtieth Annual ACM-SIAM Symposium on Discrete Algorithms},
  pages={2480--2497},
  year={2019},
  organization={SIAM}
}

@article{asi2020instance,
  title={Instance-optimality in differential privacy via approximate inverse sensitivity mechanisms},
  author={Asi, Hilal and Duchi, John C},
  journal={Advances in neural information processing systems},
  volume={33},
  pages={14106--14117},
  year={2020}
}

@inproceedings{haeupler2021universally,
  title={Universally-optimal distributed algorithms for known topologies},
  author={Haeupler, Bernhard and Wajc, David and Zuzic, Goran},
  booktitle={Proceedings of the 53rd Annual ACM SIGACT Symposium on Theory of Computing},
  pages={1166--1179},
  year={2021}
}

@inproceedings{zuzic2022universally,
  title={Universally-Optimal Distributed Shortest Paths and Transshipment via Graph-Based {$\ell_1$}-Oblivious Routing},
  author={Zuzic, Goran and Goranci, Gramoz and Ye, Mingquan and Haeupler, Bernhard and Sun, Xiaorui},
  booktitle={Proceedings of the 2022 Annual ACM-SIAM Symposium on Discrete Algorithms (SODA)},
  pages={2549--2579},
  year={2022},
  organization={SIAM}
}

@inproceedings{haeupler2022hop,
  title={Hop-constrained expander decompositions, oblivious routing, and distributed universal optimality},
  author={Haeupler, Bernhard and R{\"a}cke, Harald and Ghaffari, Mohsen},
  booktitle={Proceedings of the 54th Annual ACM SIGACT Symposium on Theory of Computing},
  pages={1325--1338},
  year={2022}
}

@inproceedings{rozhovn2022undirected,
  title={Undirected (1+eps) shortest paths via minor-aggregates: near-optimal deterministic parallel and distributed algorithms},
  author={Rozho{\v{n}}, V{\'a}clav and Grunau, Christoph and Haeupler, Bernhard and Zuzic, Goran and Li, Jason},
  booktitle={Proceedings of the 54th Annual ACM SIGACT Symposium on Theory of Computing},
  pages={478--487},
  year={2022}
}

@article{haeupler2023universal,
  title={Universal Optimality of Dijkstra via Beyond-Worst-Case Heaps},
  author={Haeupler, Bernhard and Hlad{\'\i}k, Richard and Rozho{\v{n}}, V{\'a}clav and Tarjan, Robert and T{\v{e}}tek, Jakub},
  journal={arXiv preprint arXiv:2311.11793},
  year={2023}
}

@inproceedings{mst-laplace,
  title={Shortest paths and distances with differential privacy},
  author={Sealfon, Adam},
  booktitle={Proceedings of the 35th ACM SIGMOD-SIGACT-SIGAI Symposium on Principles of Database Systems},
  pages={29--41},
  year={2016}
}

@inproceedings{smooth-sensitivity,
  title={Smooth sensitivity and sampling in private data analysis},
  author={Nissim, Kobbi and Raskhodnikova, Sofya and Smith, Adam},
  booktitle={Proceedings of the thirty-ninth annual ACM symposium on Theory of computing},
  pages={75--84},
  year={2007}
}

@article{mst-pamst,
  title={Minimum spanning tree release under differential privacy constraints},
  author={Pinot, Rafael},
  journal={arXiv preprint arXiv:1801.06423},
  year={2018}
}

@inproceedings{dwork2006calibrating,
  title={Calibrating noise to sensitivity in private data analysis},
  author={Dwork, Cynthia and McSherry, Frank and Nissim, Kobbi and Smith, Adam},
  booktitle={Theory of Cryptography: Third Theory of Cryptography Conference, TCC 2006, New York, NY, USA, March 4-7, 2006. Proceedings 3},
  pages={265--284},
  year={2006},
  organization={Springer}
}

@inproceedings{exp-mechanism,
  title={Mechanism design via differential privacy},
  author={McSherry, Frank and Talwar, Kunal},
  booktitle={48th Annual IEEE Symposium on Foundations of Computer Science (FOCS'07)},
  pages={94--103},
  year={2007},
  organization={IEEE}
}

@inproceedings{exp-mechanism-expectation,
  title={Algorithmic stability for adaptive data analysis},
  author={Bassily, Raef and Nissim, Kobbi and Smith, Adam and Steinke, Thomas and Stemmer, Uri and Ullman, Jonathan},
  booktitle={Proceedings of the forty-eighth annual ACM symposium on Theory of Computing},
  pages={1046--1059},
  year={2016}
}

@inproceedings{hay2009accurate,
  title={Accurate estimation of the degree distribution of private networks},
  author={Hay, Michael and Li, Chao and Miklau, Gerome and Jensen, David},
  booktitle={2009 Ninth IEEE International Conference on Data Mining},
  pages={169--178},
  year={2009},
  organization={IEEE}
}

@article{pinot2018graph,
  title={Graph-based clustering under differential privacy},
  author={Pinot, Rafael and Morvan, Anne and Yger, Florian and Gouy-Pailler, C{\'e}dric and Atif, Jamal},
  journal={arXiv preprint arXiv:1803.03831},
  year={2018}
}

@article{chen2022all,
  title={All-pairs shortest path distances with differential privacy: Improved algorithms for bounded and unbounded weights},
  author={Chen, Justin Y and Narayanan, Shyam and Xu, Yinzhan},
  journal={arXiv preprint arXiv:2204.02335},
  year={2022}
}

@article{fan2022private,
  title={Private graph all-pairwise-shortest-path distance release with improved error rate},
  author={Fan, Chenglin and Li, Ping and Li, Xiaoyun},
  journal={Advances in Neural Information Processing Systems},
  volume={35},
  pages={17844--17856},
  year={2022}
}

@inproceedings{brunet2016edge,
  title={Edge-calibrated noise for differentially private mechanisms on graphs},
  author={Brunet, Solenn and Canard, S{\'e}bastien and Gambs, S{\'e}bastien and Olivier, Baptiste},
  booktitle={2016 14th Annual Conference on Privacy, Security and Trust (PST)},
  pages={42--49},
  year={2016},
  organization={IEEE}
}

\end{document}